\documentclass[aps,prx,reprint,nofootinbib,longbibliography,floatfix,superscriptaddress]{revtex4-2}
\usepackage[normalem]{ulem}
\usepackage[T1]{fontenc}
\usepackage{amsmath,amssymb,amsthm,mathtools}
\usepackage{booktabs}
\usepackage{graphicx}
\graphicspath{{figures/}}
\usepackage{lmodern}
\usepackage{enumitem}
\usepackage[dvipsnames]{xcolor}
\usepackage{tikz}
\usetikzlibrary{arrows.meta,positioning,fit,calc,decorations.pathreplacing,
                backgrounds,shapes.geometric}
\usepackage[colorlinks=true,allcolors=MidnightBlue]{hyperref}

\newtheorem{theorem}{Theorem}
\newtheorem{proposition}[theorem]{Proposition}
\theoremstyle{remark}

\newcommand{\F}{\mathbb F}
\newcommand{\Span}{\operatorname{span}}
\newcommand{\wt}{\operatorname{wt}}
\newcommand{\RM}{\operatorname{RM}}
\newcommand{\rank}{\operatorname{rank}}
\newcommand{\CCZ}{\mathsf{CCZ}}
\newcommand{\CCCZ}{\mathsf{CCCZ}}
\newcommand{\CS}{\mathsf{CS}}
\newcommand{\Tgate}{\mathsf T}
\newcommand{\od}{\mathbin{\odot}}
\newcommand{\one}{\mathbf 1}
\newcommand{\factory}[3]{[[#1,#2,#3]]}
\newcommand{\cK}{\mathcal K}
\newcommand{\cS}{\mathcal S}
\newcommand{\cG}{\mathcal G}
\newcommand{\ket}[1]{\lvert #1\rangle}
\newcommand{\rr}{\raggedright\let\\=\tabularnewline}

\newcommand{\UMD}{University~of~Maryland,~College~Park,~Maryland~20742,~USA}
\newcommand{\QUICS}{Joint~Center~for~Quantum~Information~and~Computer~Science, NIST/\UMD}
\newcommand{\MITctp}{Center~for~Theoretical~Physics, a~Leinweber~Institute, Massachusetts~Institute~of~Technology, Cambridge,~Massachusetts~02139,~USA}
\newcommand{\IBMykt}{IBM~Quantum,~IBM~T.J.~Watson~Research~Center,~Yorktown~Heights,~New~York~10598,~USA}

\definecolor{FRed}{HTML}{B3242B}
\definecolor{FlowSlate}{HTML}{4F5D68}
\definecolor{FlowSage}{HTML}{628C83}
\definecolor{FlowOchre}{HTML}{B98255}
\definecolor{FlowPlum}{HTML}{786A91}
\definecolor{FlowMist}{HTML}{F3F5F5}

\definecolor{ReviewPink}{HTML}{D6337F}
\newenvironment{review}{\color{ReviewPink}}{}

\begin{document}

\title{Symmetry and AI-assisted discovery of magic-state factories}

\author{Shubham P.\ Jain}
\email{shubhamj810@gmail.com}
\affiliation{\QUICS}
\affiliation{\IBMykt}
\author{Adam Wills}
\email{a_wills@mit.edu}
\affiliation{\IBMykt}
\affiliation{\MITctp}
\author{Shraddha Singh}
\email{shraddha.s@ibm.com}
\affiliation{\IBMykt}

\date{\today}

\begin{abstract}
Magic-state distillation is a major resource cost in fault-tolerant quantum computing.
The cost of a magic-state factory depends strongly on its failure rate, which grows with the number of input magic states.  Although symmetry-restricted
methods have recently made distance two searches tractable,
distance three and above have remained elusive at moderate input counts.  We
develop symmetry- and AI-assisted methods to search this regime.  We present a unified
binary-matrix formulation encompassing both triorthogonal-code and
direct circuit searches.  We show that distance at least three is
equivalent to nonzero, pairwise distinct syndromes, separating the
choice of syndromes from the search for compatible output gates.  We restrict
the syndrome search using group symmetry and language-model agents, followed
by deterministic solving and independent verification.  Our searches yield
$699$ factory classes, including $564$ new ones.  These include factories for
pure-$\Tgate$ states and factories with entangled outputs comprising
combinations of $\Tgate$, $\CS$, and $\CCZ$ magic states.  The pure-$\Tgate$
factories
$\factory{63}{11}{3}$ and $\factory{850}{128}{6}$ achieve the lowest overhead
exponents we know among protocols with at most $100$ and $1000$
inputs, respectively, with $\gamma=1.589$ and $\gamma=1.057$.  Our
$\factory{1715}{287}{6}$ factory, with $\gamma=0.998$, is the smallest known
pure-$\Tgate$ factory with $\gamma<1$.  We also provide a context
directory of search briefs and campaign notes with which readers can train their own agents and tailor the search to their requirements. With these results, we begin constructing an active, open-source repository of magic-state distillation protocols for the quantum community, supplemented by our methods and data, for the practical fault-tolerant quantum computing regime.
\end{abstract}

\maketitle

\section{Introduction}\label{sec:intro}

Fault-tolerant quantum computation protects logical information using quantum
error-correcting codes~\cite{nielsen2010quantume,gottesmansurvivinga,spencer2026quantum}.  On
most practically accessible codes~\cite{gottesman1997stabilizer},
the Clifford gates, generated by the Hadamard, the phase gate
$S=\operatorname{diag}(1,i)$ and the CNOT, can often be implemented by transversal gates or lattice
surgery~\cite{fowler2019low,litinski2019game}.
  But no code
that detects an arbitrary single-qubit error has a universal transversal
gate set~\cite{eastin2009restrictionsc}.  A common way to achieve universality
fault-tolerantly is to supply a non-Clifford gate $U$ through gate
teleportation~\cite{gottesman1999demonstratinga,zhou2000methodology}.  Gate
teleportation is a Clifford circuit by which the \emph{magic state}
$\ket{U}$, obtained by applying $U$ to $\ket{+}$ states, is \emph{injected}
to apply $U$.  The teleported gate is fault-tolerant when the Clifford circuit
is fault-tolerant, which is generally trivial, and the magic state is of high-quality.
Supplying such states is the problem that \emph{magic-state
distillation}~\cite{knill2005quantum,bravyi2005universalb} was designed to
solve.  It converts many low-quality magic states into fewer high-quality
ones.  Resource estimates
attribute a large share of a machine's qubits and time to
distillation~\cite{gidney2025how,zhou2025resource}.

The magic states most studied in the distillation literature are $\ket\Tgate$,
$\ket\CS$ and $\ket\CCZ$.  They teleport the $\Tgate=\operatorname{diag}(1,e^{i\pi/4})$
gate, the $\CS=\operatorname{diag}(1,1,1,i)$ gate and the doubly controlled $Z$
gate $\CCZ$, respectively.  Among these, a noisy $\Tgate$ state is the easiest
to encode in most practical
codes~\cite{li2015magic,singh2022highfidelity,zhang2025constantoverhead,liu2026insitu}.
We therefore concentrate on factories whose inputs are $\Tgate$ states.  For
efficiency of the compiled circuits, however, we would like to analyze as
large a set of outputs as possible.  We restrict to the third level of the
\emph{Clifford hierarchy}~\cite{gottesman1999demonstratinga,cui2017diagonala}.
These are the gates that map Pauli operators to Clifford operators under
conjugation, and their gate teleportation circuit is
trivial.\footnote{States in deeper levels of the Clifford hierarchy require,
with some probability, non-Clifford corrections for gate teleportation to
succeed.}  Importantly, our set of targets spans every magic state required to
teleport a diagonal gate in the third level of the Clifford
hierarchy~\cite{cui2017diagonala}.

The original idea behind a distillation circuit is inspired by QEC codes with transversal implementation of non-Clifford gates. It consumes
$n$ noisy input magic states, each with an independent and identically distributed error of probability $p$, on which $r$ \emph{parity checks} are measured, each the parity of
the faults on some subset of the inputs. The \emph{distance} $d$ is the smallest number of faulty input magic states that pass every check and still corrupt the output, thus yielding an error scaling of $O(p^d)$.  If every single fault is detected, the rejection probability
is approximately $np$ when $np\ll1$.  We write
such a factory as $\factory nkd$, and its
\emph{width} $N=k+r$ counts the logical qubits, or \emph{wires}, it acts on.
For pure-$\Tgate$ outputs the literature uses a standard figure of merit, the
\emph{overhead exponent} $\gamma$~\cite{bravyi2012magic,haah2018codes}.
Reaching an output error $\epsilon$ by concatenating a factory with itself
costs
\begin{equation}
 O\bigl(\log^{\gamma}(1/\epsilon)\bigr),\qquad
 \gamma=\frac{\log(n/k)}{\log d},
 \label{eq:gamma}
\end{equation}
inputs per output.  For a fixed output target and required distance, we
therefore minimize the input count $n$.

The first factory distilled fifteen noisy $\ket\Tgate$ states into one on
the $15$-qubit Reed--Muller code, whose $\Tgate$ is
transversal~\cite{bravyi2005universalb}.  Bravyi and Haah reduced the code
requirement to a combinatorial condition.  A factory is a binary
\emph{triorthogonal} matrix with prescribed single, pair and triple overlap
parities of its rows~\cite{bravyi2012magic}.  Haah and Hastings extended the condition to
$\CS$ and $\CCZ$ outputs and built families by puncturing Reed--Muller
codes~\cite{haah2018codes}.  Richer protocols include Toffoli-state
distillation~\cite{jones2013lowoverhead,eastin2013distilling},
\emph{synthillation}, which directly distills the gate a circuit needs rather than
the $\Tgate$ states it would be built
from~\cite{campbell2017unifieda,campbell2017unifying}, \emph{catalysis}, in
which a magic state assists a conversion and is returned
intact~\cite{campbell2011catalysis,gidney2019efficient}, and codes over
binary extension fields~\cite{gong2026magic}.  Existing approaches include
asymptotic constructions, exhaustive classifications at small sizes, and
searches within specified code families.  Asymptotic constructions reach
constant space overhead at large block lengths~\cite{
wills2025constantoverhead,golowich2025asymptotically,%
nguyen2025good,san-jose2026asymptotically}, and architectural studies estimate the cost of surface-code layouts
and injections~\cite{litinski2019magic,itogawa2025efficient,gidney2024magic,jacoby2025magic,sahay2026foldtransversal,rosenfeld2025magic,gupta2024encoding,hirano2026efficient,li2015magic,singh2022highfidelity,zhang2025constantoverhead,liu2026insitu}.
At small sizes, exhaustive classification is possible. Nezami and Haah
enumerated all triorthogonal codes with $n+k\le38$, which are the
pure-$\Tgate$ factories~\cite{nezami2022classification}.  Our companion
paper extends the enumeration to $n\le54$ and to every third-level
output~\cite{wills2026classification}.  Other searches use
particular code structures. Doubling and code conversion make
transversal-$\Tgate$ codes from transversal-Clifford
codes~\cite{bravyi2015doubleda,jain2025transversala,sullivan2024codeb},
borrowed identities carve a factory from a circuit that acts as the
identity on $\ket+^{\otimes N}$~\cite{singh2026borrowed}, and
satisfiability and integer-programming solvers map compact factories at
small widths~\cite{jacinto2026exploring,baldelli2026constructing}.

We search beyond the classified range for factories with tens to hundreds
of inputs, delivering $\Tgate$ states and entangled states at the third
level of the Clifford hierarchy~\cite{gottesman1999demonstratinga,zhou2000methodology},
at distance three and above.
Efficient searches in the borrowed-identity framework already chart
the distance-two landscape~\cite{singh2026borrowed,singh2026shraggy}, and
cultivation~\cite{sahay2026foldtransversal,rosenfeld2025magic,hirano2026efficient,gidney2024magic}
supplies inputs clean enough that a single distance-two distillation step may
suffice for some applications.  The error rates that large algorithms
demand~\cite{gidney2025how}, however, call for distance three or more, and
there the landscape beyond $n\le54$ is uncharted.

Our method rests on one observation.  Each input flips a set of checks,
which we call its syndrome.  When no two inputs touch the same
wires, distance at least three is equivalent to the syndromes being nonzero
and pairwise distinct~\cite{singh2026borrowed} (see Sec.~\ref{sec:labels}).  We therefore choose the
syndromes first, and we call the set of syndromes a factory uses its
parent.  For each parent, an enumeration of compatible output rows determines the
factories it supports.  Enumerating these rows for every parent gives the
exhaustive search used in our companion paper~\cite{wills2026classification} to
extend the classification of Ref.~\cite{nezami2022classification}.  In the language of triorthogonal codes, the parent spans the $X$ stabilizers and the output rows are the logical operators.  When
this is no longer feasible, we use symmetries to restrict the parent search.  A group
acting on the syndromes turns the choice of a parent into a choice of its orbits.  This is our \emph{symmetry slot ansatz}.  The choice of group or parent to start from remains heuristic, and we delegate this choice to AI agents.

A \emph{large language
model} (LLM) is a neural network trained to continue text and an \emph{agent}
uses an LLM to perform multi-step tasks such as reading solver outcomes and
writing proposals for later runs. An \emph{orchestrator}~\cite{adimulam2026orchestration,roman2026orchestral,song2026paperorchestra}
coordinates several agents working on one task.  Agent-based methods have already had success in quantum error correction.  Cruz-Benito et
al.\ used an LLM-guided evolutionary search to discover bivariate
bicycle codes, a family of quantum low-density parity-check
codes~\cite{cruz-benito2026evolutionary}.  Multi-agent systems have
co-designed codes with transversal diagonal gates~\cite{he2026codesigning}.
Other LLM-driven, reinforcement-learning and evolutionary systems have
found codes, encoders and
encodings~\cite{liu2026largelanguagemodel,yan2026omniqec,%
heitritter2026evolving,nautrup2019optimizing,olle2024simultaneous,%
webster2025engineering}.  Beyond quantum, program search with language
models has produced new constructions in combinatorics and new
algorithms~\cite{romera-paredes2024mathematical,novikov2025alphaevolve}.
Neuro-symbolic systems pair a proposing model with an exact engine that
checks every step~\cite{trinh2024solving}.  In our search, the LLM proposes
candidate structures and a deterministic verifier checks the resulting
circuits.

Throughout, the protocols are \emph{postselected diagonal CSS
protocols}, that is, level-three diagonal rotations with independent $Z$ faults,
checks of pure $X$ type, and rejection on any nonzero outcome.
Every factory we report is an explicit circuit, available in our public
database~\cite{jain2026shubhamj810}.

Section~\ref{sec:ai-landscape} reports the catalog and its highlights.
Section~\ref{sec:matrix} develops the matrix formulation and its two
readings, as a code and as a circuit.
Section~\ref{sec:parent} describes the output search for a fixed parent,
Sec.~\ref{sec:slots} introduces symmetry slots, and Sec.~\ref{sec:ai-methods}
presents the agent workflow.  Section~\ref{sec:outlook} discusses limitations, outlook
and future directions.

\begin{figure*}[t]
\centering
\includegraphics[width=\textwidth]{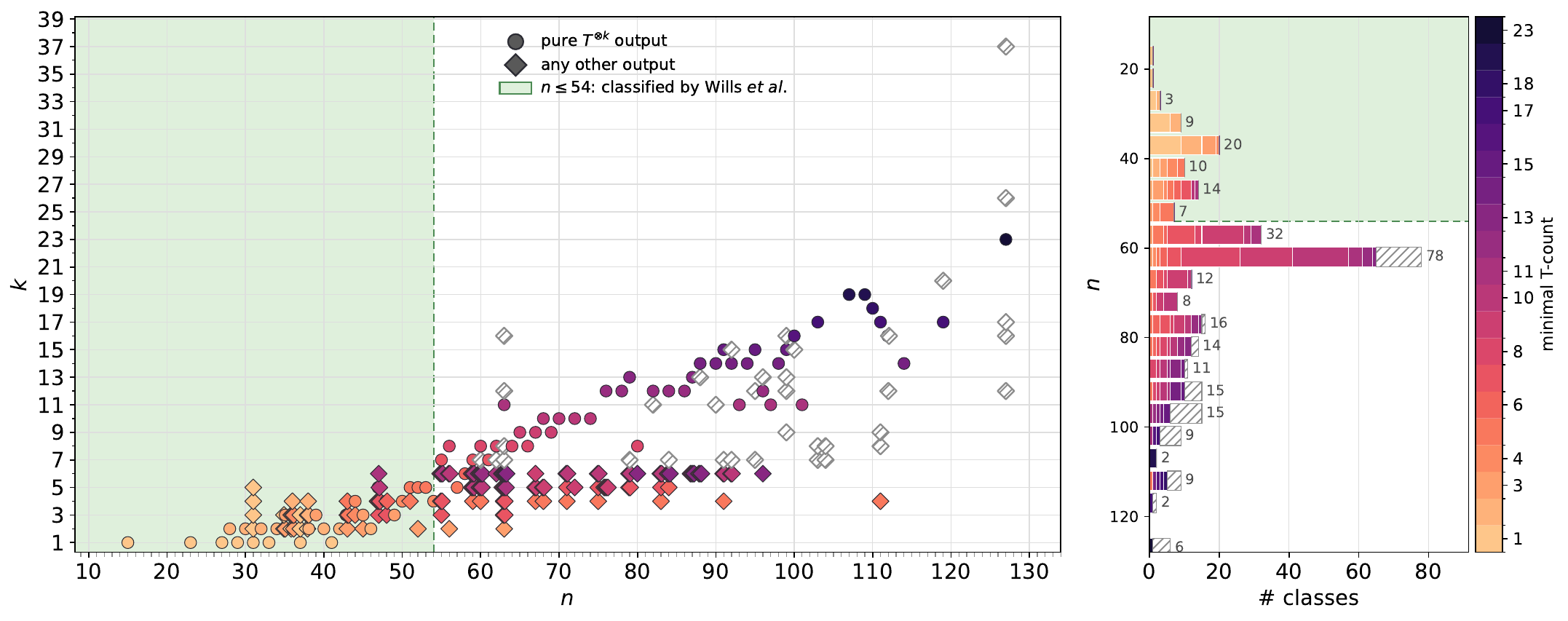}
\caption{The distance-three catalog up to $n=127$.  A
circle is a factory with a pure-$\Tgate^{\otimes k}$ output, and a diamond one with any other output.  Color is the
$\Tgate$-count of the output gate, labeled ``minimal $\Tgate$-count'' on the
color bar. Hatched markers are entangled
outputs too wide for that exact minimization.  A pure-$\Tgate$ factory is
omitted when a wider one at the same $n$ and $d$ already delivers it. The green band is
$n\le54$, the range the companion paper~\cite{wills2026classification}
classifies exhaustively at this distance. The markers inside the band are classes that the methods of this paper found independently of the exhaustive classification. The right panel counts
the drawn classes in bins of five consecutive $n$ across the same range and
splits each bin by $\Tgate$-count.  The bins are placed so that $n=54$ falls on
a boundary, and no bar mixes the two sides of the band.  Of the $294$ classes drawn, $65$ lie in the band,
$225$ are new in this work and the remaining four were already known.}
\label{fig:landscape}
\end{figure*}

\section{Results}\label{sec:ai-landscape}

This section reports what our searches found.  Every
class is a postselected diagonal CSS protocol in the sense of
Sec.~\ref{sec:intro}, written $\factory nkd$ for $n$ inputs, $k$ outputs and
distance $d$.  
 Catalytic and Hadamard-assisted
protocols lie outside this class and could reach parameters the catalog
does not~\cite{singh2026borrowed,gidney2019efficient}, so every comparison
below is within the class.

\begin{table}[t]
\centering\small
\renewcommand{\arraystretch}{1.15}
\setlength{\tabcolsep}{7pt}
\caption{The classes this paper reports, by distance, and how many of them
are new.  A class is a quadruple $(n,k,d,\text{output gate})$ as defined in the
text.  A class counts as new when no published protocol states it and it
lies outside the range of exhaustive classifications.}
\label{tab:bydistance}
\begin{tabular}{@{}l r r r r r r r r@{}}
\toprule
$d$ & $3$ & $4$ & $5$ & $6$ & $7$ & $8$ & $9$ & all\\
\midrule
classes & $493$ & $88$ & $69$ & $42$ & $5$ & $1$ & $1$ & $699$\\
new & $382$ & $72$ & $65$ & $39$ & $4$ & $1$ & $1$ & $564$\\
\bottomrule
\end{tabular}
\end{table}

\begin{table}[t]
\centering\small
\renewcommand{\arraystretch}{1.15}
\setlength{\tabcolsep}{7pt}
\caption{Some highlighted factories resulting from our methods.  The
$\factory{850}{128}{6}$ factory has overhead exponent $\gamma=1.057$
(Table~\ref{tab:rates}), the lowest we know at $n\le1000$.  Each factory
listed has the smallest input count we know of for its output at its distance.}
\label{tab:flagship}
\begin{tabular}{@{}l l r@{}}
\toprule
output gate & factory & $N$\\
\midrule
$\Tgate^{\otimes7}$ & $\factory{55}{7}{3}$ & 14\\
$\Tgate^{\otimes11}$ & $\factory{63}{11}{3}$ & 17\\
$\Tgate^{\otimes16}$ & $\factory{100}{16}{3}$ & 23\\
$\Tgate^{\otimes23}$ & $\factory{127}{23}{3}$ & 30\\
$\Tgate^{\otimes128}$ & {$\factory{850}{128}{6}$} & {166}\\
\midrule
$\CS^{\otimes3}$ & $\factory{63}{6}{3}$ & 12\\
$\CCZ^{\otimes2}$ & $\factory{63}{6}{3}$ & 12\\
$\CCZ^{\otimes4}$ & $\factory{100}{12}{4}$ & 20\\
\midrule
$\Tgate\otimes\CCZ^{\otimes2}$ & $\factory{63}{7}{3}$ & 13\\
$\Tgate^{\otimes9}\otimes\CCZ$ & $\factory{63}{12}{3}$ & 18\\
$\Tgate\otimes\CCZ^{\otimes4}$ & $\factory{99}{13}{3}$ & 20\\
$\Tgate^{\otimes4}\otimes\CCZ^{\otimes4}$ & $\factory{112}{16}{3}$ & 23\\
\bottomrule
\end{tabular}
\end{table}

\begin{table}[t]
\centering\footnotesize
\renewcommand{\arraystretch}{1.15}
\setlength{\tabcolsep}{5pt}
\caption{The punctured Reed--Muller factories tabulated by Haah and
Hastings~\cite{haah2018codes}, beside the lowest overhead exponent we find
at $n\le1000$ for the same distance. Every factory in the table is a pure
$\Tgate^{\otimes k}$ factory, and $\gamma=\log(n/k)/\log d$ is the overhead
exponent, where lower is better. At each distance, our work achieves the lowest exponents known and the distance-six one is the lowest known at any distance at $n\le1000$.}
\label{tab:rates}
\begin{tabular}{@{}c ll ll@{}}
\toprule
 & \multicolumn{2}{c}{Ref.~\cite{haah2018codes}} & \multicolumn{2}{c}{this work}\\
\cmidrule(lr){2-3}\cmidrule(lr){4-5}
$d$ & factory & $\gamma$ & factory & $\gamma$\\
\midrule
$3$ & $\factory{863}{161}{3}$ & $1.528$ & $\factory{750}{210}{3}$ & $\mathbf{1.159}$\\
$4$ & $\factory{872}{152}{4}$ & $1.260$ & $\factory{429}{83}{4}$ & $\mathbf{1.185}$\\
$5$ & $\factory{887}{137}{5}$ & $1.161$ & $\factory{879}{145}{5}$ & $\mathbf{1.120}$\\
$6$ & $\factory{912}{112}{6}$ & $1.170$ & $\factory{850}{128}{6}$ & $\mathbf{1.057}$\\
\bottomrule
\end{tabular}
\end{table}

\begin{figure}[t]
\centering
\includegraphics[width=\columnwidth]{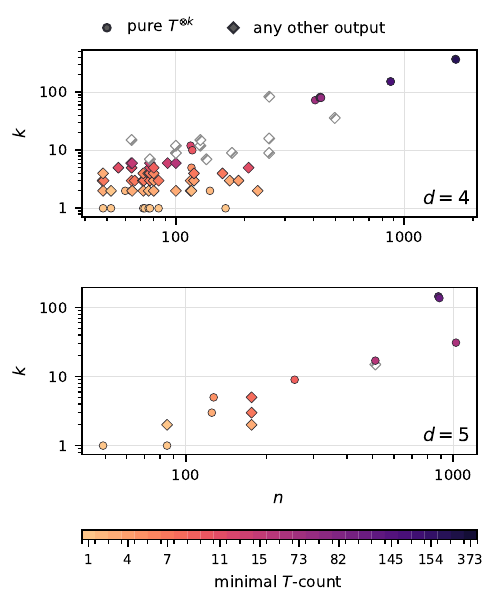}
\caption{The catalog at distances four and five, on logarithmic axes from
$n=48$ to $1683$.  Conventions are as in Fig.~\ref{fig:landscape}, with pure-$\Tgate$ factories at any $n$ represented by their highest $k$ member. Of the $97$ classes drawn, $78$ are new in
this work.}
\label{fig:landscape4}
\end{figure}

Our searches produced $699$ \emph{classes}, a class being a quadruple
$(n,k,d,\text{output gate})$ with the gate taken up to an invertible change of the output basis, the group $\mathrm{GL}(k,2)$,
together with diagonal Clifford corrections (see Appendix~\ref{app:catalogue} for the full list).  In other words, a class represents all the outputs connected by a CNOT+$S$ circuit.  We identify $564$ as new because they lie outside the classified ranges and published protocol parameters.  The new classes span $n=55$
to $1715$, with $320$ at $n\le127$.

To place outputs of different kinds on one scale, we use the minimum unitary
\emph{$\Tgate$-count}.  It is the minimum number of
$\Tgate$ gates required in any circuit of CNOT and $\Tgate$ gates to reach the output gate,
computed exactly by Reed--Muller coset decoding~\cite{amy2019tcount}. We use this computation
up to $k=6$. The $\Tgate$-count of a pure-$\Tgate^{\otimes k}$ output is $k$, a $\CCZ$ output $7$ and a $\CS$ output $3$.  We count over CNOT$+\Tgate$ circuits
because a native output replaces as many separately distilled $\Tgate$
states as its $\Tgate$-count, the states that conventional synthesis would spend on that gate, the
mechanism behind
synthillation~\cite{campbell2017unifieda,campbell2017unifying}.  This metric describes the synthesis cost within CNOT$+\Tgate$ circuits.
It does not determine a factory's full implementation cost. 
Figures~\ref{fig:landscape} and~\ref{fig:landscape4} draw the classes
on the $(n,k)$ plane, colored by this count. Table~\ref{tab:flagship} lists selected factories which have the smallest
input count we know of among postselected diagonal CSS protocols
for its output gate at its stated distance.

Codes with transversal $\Tgate$ gates reach
$\gamma\to0$~\cite{wills2026improved,san-jose2026asymptotically}, but only as
the block length grows, and their performance at the lengths a machine could
afford has not been established.  We therefore compare only explicit
constructions with $n\le1000$.  The natural comparison for the wide
pure-$\Tgate^{\otimes k}$ end of the
catalog is the punctured Reed--Muller triorthogonal family of Haah and
Hastings~\cite{haah2018codes}, whose tables give explicit codes at each
distance from three to seven.  Table~\ref{tab:rates} compares their tabulated
codes at distances three to six with our results at the same distance.
Their best tabulated value, $\gamma=1.161$, is achieved by
$\factory{887}{137}{5}$.  Our searches reach a lower exponent at every distance
in the table, and $\factory{850}{128}{6}$ at $\gamma=1.057$ is the lowest we
know of for an explicit pure-$\Tgate$ factory at $n\le1000$.  We also find a
$\factory{1715}{287}{6}$ factory at $\gamma=0.998$, the smallest known
pure-$\Tgate$ factory with an exponent below one.  Our
entries were found by instantiating the frameworks of
Refs.~\cite{wills2026improved,san-jose2026asymptotically} with the puncture
searches of Appendix~\ref{app:ai-puncture}. 

The improvement is not
confined to the largest codes.  At distance three, the $\factory{127}{23}{3}$
factory uses approximately $5.52$ inputs per output, compared with $5.74$ for
their $\factory{109}{19}{3}$.  Our $\factory{63}{11}{3}$ has a similar ratio,
$5.73$ inputs per output, while consuming $63$ inputs rather than $109$.  Its
exponent, $\gamma=1.589$, is the lowest we know among pure-$\Tgate$ factories
with $n\le100$.


\section{Factory as a binary matrix}\label{sec:matrix}

We represent a magic-state factory by a binary matrix whose columns specify
its parity rotations.  The matrix can then be read in two ways.  Read as a
code, it is a triorthogonal code or one of its
generalizations~\cite{bravyi2012magic,haah2018codes}.  Read as a circuit, it
is the list of parity rotations of a circuit
construction~\cite{jacinto2026exploring,singh2026borrowed}.  Each
reading takes a different object as the unknown.  We
do not commit to either reading and work with the matrix itself.  We choose
its check part first and its outputs second, the syndrome-first formulation
that Sec.~\ref{sec:matrix-problem} introduces. 

\subsection{The matrix and the design problem}\label{sec:matrix-problem}\label{sec:labels}

Fix $N=k+r$ wires, $k$ outputs and $r$ checks, all prepared in $\ket+$.  Each
nonzero $\alpha\in\F_2^N$, a length-$N$ vector over the binary field
$\F_2=\{0,1\}$, names a parity rotation
\begin{equation}
 Z_\alpha=\exp\!\Bigl[i\tfrac\pi8\bigl(I-\textstyle\prod_{a:\alpha_a=1}Z_a\bigr)\Bigr],
 \label{eq:rotation}
\end{equation}
the weight-$w$ phase rotation $Z^{\otimes w}(\pi/8)$ of
Refs.~\cite{jones2013lowoverhead,singh2026borrowed} with $w=\wt(\alpha)$, the Hamming weight of $\alpha$.  It is
a $\Tgate$ rotation of the parity of the wires in the support of $\alpha$.  It costs one input, because a ladder of CNOTs moves a single-qubit $\Tgate$ onto any parity (Fig.~\ref{fig:factory}, legend).  If that input is faulty, the same $\alpha$ is the $Z$ fault it leaves behind.  A \emph{reduced circuit}
is an $N\times n$ binary matrix $G$ with distinct nonzero columns, one per
input. Repeated columns cancel up to a
Clifford, since two $\Tgate$ rotations of the same parity compose to a
Clifford rotation. Column $m$ marks the wires that rotation $m$ touches, and row
operations are CNOTs.  We call a factory \emph{full rank} when its check rows
are independent and its output rows are independent modulo the check span,
equivalently when $\rank G=N$.  When the columns are the unknown we write the same matrix as
$G_y$, where the selection $y\in\F_2^{2^N-1}$ has one bit per nonzero $\alpha$ and records which of the $2^N-1$ possible columns appear.

\begin{figure*}[t]
\centering
\begin{tikzpicture}[x=1cm,y=1cm,
  wire/.style={thick,FlowSlate},
  lab/.style={font=\scriptsize,anchor=east,inner sep=1.5pt},
  dot/.style={circle,fill=FlowSlate,inner sep=0pt,minimum size=4.2pt},
  tbox/.style={draw=FlowOchre,fill=FlowOchre!14,font=\scriptsize\bfseries,
               inner sep=1.6pt,minimum size=0.34cm,rounded corners=1pt},
  bit/.style={font=\small},
  meter/.style={draw=FlowOchre,fill=FlowOchre!12,inner sep=2pt,
                rounded corners=1pt},
  tag/.style={font=\scriptsize\itshape,FlowSlate},
  cnot/.style={circle,draw=FlowSlate,thick,inner sep=0pt,minimum size=7pt}]
\def\xs{1.05}  
\def\dx{0.82}  
\foreach \y/\l in {2.7/1,2.1/2,1.5/3} {
  \draw[wire] (0.3,\y)--(7.55,\y); \node[lab] at (0.3,\y) {$\ket+$};}
\draw[decorate,decoration={brace,amplitude=4pt},FlowSlate,semithick]
  (7.62,2.82)--(7.62,1.38);
\node[anchor=west,font=\small] at (7.78,2.1) {$\ket{\CCZ}$};
\node[tag,anchor=west] at (7.78,1.75) {$k=3$ outputs};
\draw[wire] (0.3,0.7)--(7.15,0.7); \node[lab] at (0.3,0.7) {$\ket+$};
\node[meter,minimum width=0.46cm,minimum height=0.38cm] (mt) at (7.4,0.7) {};
\draw[FlowOchre,semithick] ($(mt.center)+(-0.14,-0.08)$)
  arc[start angle=180,end angle=0,radius=0.14];
\draw[FlowOchre,semithick,-{Latex[length=1mm]}] ($(mt.center)+(0,-0.08)$)
  -- ($(mt.center)+(0.1,0.12)$);
\node[font=\scriptsize,FlowOchre,anchor=south,inner sep=0.5pt] at (mt.north) {$X$};
\node[tag,anchor=west] at (7.7,0.7) {check};
\foreach \m/\a/\b/\c in {0/0/0/0,1/0/0/1,2/0/1/0,3/0/1/1,4/1/0/0,5/1/0/1,6/1/1/0,7/1/1/1} {
  \pgfmathsetmacro{\x}{\xs+\m*\dx}
  \ifnum\a=1 \draw[wire] (\x,2.7)--(\x,0.7); \node[dot] at (\x,2.7) {};
  \else \ifnum\b=1 \draw[wire] (\x,2.1)--(\x,0.7);
  \else \ifnum\c=1 \draw[wire] (\x,1.5)--(\x,0.7); \fi\fi\fi
  \ifnum\b=1 \node[dot] at (\x,2.1) {}; \fi
  \ifnum\c=1 \node[dot] at (\x,1.5) {}; \fi
  \node[tbox] at (\x,0.7) {$\Tgate$};
  \node[bit] at (\x,-0.25) {$\a$};
  \node[bit] at (\x,-0.65) {$\b$};
  \node[bit] at (\x,-1.05) {$\c$};
  \node[bit] at (\x,-1.6) {$1$};
}
\draw[FlowSlate,semithick] (0.62,0.02)--(0.5,0.02)--(0.5,-1.85)--(0.62,-1.85);
\draw[FlowSlate,semithick] (7.1,0.02)--(7.22,0.02)--(7.22,-1.85)--(7.1,-1.85);
\draw[FlowSlate!60] (0.62,-1.32)--(7.1,-1.32);
\node[font=\small,anchor=east] at (0.42,-0.9) {$G=$};
\node[tag,anchor=west] at (7.3,-0.25) {output $1$};
\node[tag,anchor=west] at (7.3,-0.65) {output $2$};
\node[tag,anchor=west] at (7.3,-1.05) {output $3$};
\node[tag,anchor=west] at (7.3,-1.6) {check};
\begin{scope}[shift={(9.95,0)}]
\node[font=\scriptsize,anchor=west] at (-0.1,3.05) {parity rotation $Z_\alpha$ on wires $\{a,b,c\}$:};
\foreach \y in {2.6,2.1,1.6} {\draw[wire] (0.1,\y)--(0.85,\y); \draw[wire] (1.25,\y)--(3.9,\y);}
\node[lab] at (0.1,2.6) {$a$}; \node[lab] at (0.1,2.1) {$b$}; \node[lab] at (0.1,1.6) {$c$};
\draw[wire] (0.55,2.6)--(0.55,1.6); \node[dot] at (0.55,2.6) {}; \node[dot] at (0.55,2.1) {};
\node[tbox] at (0.55,1.6) {$\Tgate$};
\node[font=\small] at (1.05,2.1) {$=$};
\draw[wire] (1.5,2.6)--(1.5,1.6); \node[dot] at (1.5,2.6) {}; \node[cnot] at (1.5,1.6) {};
\draw[wire] (1.5,1.72)--(1.5,1.48);
\draw[wire] (2.0,2.1)--(2.0,1.6); \node[dot] at (2.0,2.1) {}; \node[cnot] at (2.0,1.6) {};
\draw[wire] (2.0,1.72)--(2.0,1.48);
\node[tbox] at (2.5,1.6) {$\Tgate$};
\draw[wire] (3.0,2.1)--(3.0,1.6); \node[dot] at (3.0,2.1) {}; \node[cnot] at (3.0,1.6) {};
\draw[wire] (3.0,1.72)--(3.0,1.48);
\draw[wire] (3.5,2.6)--(3.5,1.6); \node[dot] at (3.5,2.6) {}; \node[cnot] at (3.5,1.6) {};
\draw[wire] (3.5,1.72)--(3.5,1.48);
\node[font=\scriptsize,anchor=west,text width=4.1cm] at (-0.1,0.55)
  {one $\Tgate$ input per rotation; the CNOT ladder moves the $\Tgate$
   onto the parity of the marked wires, and a faulty input is a
   $Z$ error on exactly those wires};
\end{scope}
\end{tikzpicture}
\caption{The $\factory{8}{3}{2}$ $\CCZ$ factory as a circuit~\cite{singh2026borrowed} and as a matrix~\cite{kubica2015unfolding,%
2016smallest}.  Four wires prepared in $\ket+$
receive eight parity rotations $Z_\alpha$, drawn as a vertical connector with
a dot on each wire in the support of $\alpha$ and the $\Tgate$ box on the
check wire. Beneath each rotation is its corresponding column in the matrix $G$ with support on every
wire the rotation touches, the top three rows being the outputs and the
bottom row the check. At the end the check wire is measured in the $X$ basis and the
run is kept only if the outcome is $+1$ (no check flipped). Three output wires carry the delivered 
$\ket{\CCZ}=\CCZ\ket{+\!+\!+}$ state.  By Eq.~\eqref{eq:em-gate}, every column count is even except that of the
triple $\{1,2,3\}$, which lies only in $(111\mid1)$, so the circuit
applies $\CCZ_{123}$.  A single
fault flips the check and is detectable. Two faults cancel on the check but, being distinct
columns, differ on the outputs, so $d=2$~\cite{singh2026borrowed}.}
\label{fig:factory}
\end{figure*}

The gate the circuit implements can be read off by counting columns.  A
parity rotation $Z_\alpha$ multiplies a computational basis state $\ket z$
by $e^{i\pi/4}$ raised to the parity $\bigoplus_{a:\alpha_a=1}z_a$.  That parity obeys the integer identity
$\bigoplus_az_a\equiv\sum_az_a-2\sum_{a<b}z_az_b+4\sum_{a<b<c}z_az_bz_c
\pmod 8$.  Applying all $n$ rotations
and collecting terms, the circuit acts, up to a diagonal Clifford, as
\begin{equation}
 \prod_a\Tgate_a^{\,c_a}\ \prod_{a<b}\CS_{ab}^{\,c_{ab}}\
 \prod_{a<b<c}\CCZ_{abc}^{\,c_{abc}},
 \label{eq:em-gate}
\end{equation}
where the \emph{column counts} $c_a$, $c_{ab}$ and $c_{abc}$ are the numbers
of columns that contain wire $a$, both $a$ and $b$, and all three of $a$, $b$
and $c$.  The sign of the middle term is absorbed into the diagonal Clifford,
since $\CS^{-1}$ differs from $\CS$ by a controlled-$Z$.  Since $\Tgate^2$,
$\CS^2$ and $\CCZ^2$ are Clifford gates, only the parities of these counts
carry non-Clifford content.  A wire, pair or
triple lying in an odd number of columns receives a $\Tgate$, a $\CS$ or a
$\CCZ$ respectively, and the rest is a diagonal Clifford that is tracked
classically.  In the matrix $G$, wire $a$ is row $g_a$, and the count $c_Q$
of columns containing the wires in a set $Q$ is the Hamming weight of
the coordinatewise product $\bigodot_{a\in Q}g_a$, so the gate is fixed by
the parities
\begin{equation}
 \wt\Bigl(\,\bigodot_{a\in Q}g_a\Bigr)\bmod 2,\qquad 1\le|Q|\le3 .
 \label{eq:em-overlaps}
\end{equation}
We call this parity the \emph{moment} of $Q$, and $|Q|$ its \emph{degree}, so
degrees one, two and three carry the $\Tgate$, $\CS$ and $\CCZ$ content.
Figure~\ref{fig:factory} carries out this count for a worked example, the
$\factory{8}{3}{2}$ $\CCZ$ factory.

The gate is thus a diagonal gate in the
third level of the Clifford hierarchy, given as a product of integer
powers of $\Tgate$, $\CS$ and $\CCZ$ and determined up to a tracked
diagonal Clifford gate.  The target vector $D$ lists the parities the
output gate requires, one bit for each index $Q$ with $|Q|\leq 3$. Every entry whose $Q$ touches a check wire is zero, because a
check must carry no non-Clifford phase or entanglement with the output wires.  The monomial
matrix $R_N$, with entries $(R_N)_{Q,\alpha}=\prod_{a\in Q}\alpha_a$, has
one row per index $Q$ and one column per nonzero $\alpha$, and its entry
is one exactly when $\alpha$ contains $Q$.  The product $R_Ny$
therefore counts, modulo two, the selected columns that contain each $Q$. This is 
exactly the parity of Eq.~\eqref{eq:em-overlaps}.  For the factory of
Fig.~\ref{fig:factory}, the target vector has a single one, at $Q=\{1,2,3\}$, with the rest being zero.  

A selection $y$ is a valid factory for the target $D$ when $R_Ny=D$. A valid factory still has to protect its outputs, and the distance $d$ of Sec.~\ref{sec:intro}, which we now write $d(G)$, measures how well it does.  Split each column into its output and check parts,
$\alpha^m=(P_m\mid S_m)\in\F_2^k\times\F_2^r$. $S_m$ is the \emph{syndrome} of
fault $m$, the checks it flips, and $P_m$ is its \emph{label}, the outputs it corrupts.  A set
$F$ of faulty inputs is \emph{accepted} when its syndromes cancel and
\emph{harmful} when its labels do not. The circuit distance is then
\begin{equation}
 d(G)=\min_F\bigl\{|F|:\ \textstyle\bigoplus_{m\in F}S_m=0,\
 \bigoplus_{m\in F}P_m\ne0\bigr\}.
 \label{eq:distance}
\end{equation}
the size of the smallest fault set that is both accepted and harmful.
The factory design problem is then a search for $y$ such that
\begin{equation}
 R_Ny=D\qquad \textrm{and} \qquad d(G_y)\ge d.
 \label{eq:em-problem}
\end{equation}
Minimizing $n=\wt(y)$ optimizes the cost.

The distance condition is simpler than it looks.  For a reduced circuit,
Eq.~\eqref{eq:distance} gives $d(G)\ge2$ if and only if every syndrome
$S_m$ is nonzero, since a zero syndrome makes its single fault accepted
and, by reducedness, harmful.  The distance-three condition is the observation this work rests on.

\begin{theorem}[distance-three syndrome criterion]\label{thm:syndrome-criterion}
For a reduced circuit $G$, $d(G)\ge3$ if and only if its syndromes are
nonzero and pairwise distinct.
\end{theorem}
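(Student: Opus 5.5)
The plan is to argue directly from the distance formula, Eq.~\eqref{eq:distance}, using only two facts about a reduced circuit: its columns $\alpha^m=(P_m\mid S_m)$ are nonzero, and they are pairwise distinct. Since $d(G)\ge3$ means that no fault set $F$ with $|F|\in\{1,2\}$ is both accepted and harmful, I will treat $|F|=1$ and $|F|=2$ separately and show that each case corresponds to exactly one half of the syndrome condition.

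First, the case $|F|=1$. This is the $d\ge2$ statement the paper already records just before the theorem. If $F=\{m\}$, then $F$ is accepted exactly when $S_m=0$. Because $\alpha^m\neq0$, a zero syndrome forces $P_m\neq0$, so any such $F$ is automatically harmful. Hence a singleton fault set that is both accepted and harmful exists if and only if some syndrome is zero.

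Second, the case $|F|=2$. Take $F=\{m,m'\}$ with $m\ne m'$. This $F$ is accepted exactly when $S_m=S_{m'}$. Reducedness gives $\alpha^m\neq\alpha^{m'}$, so whenever the syndromes agree the labels must differ, which means $P_m\oplus P_{m'}\neq0$ and $F$ is harmful. Hence a two-element fault set that is both accepted and harmful exists if and only if two distinct inputs share a syndrome.

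Combining the two cases, $d(G)\ge3$ holds precisely when every syndrome is nonzero and no two syndromes coincide. This gives both directions of Theorem~\ref{thm:syndrome-criterion} at once.

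The only point needing care is in the $|F|=2$ case: it is reducedness, namely distinct columns, that turns "accepted" into "harmful" there. Without it, repeated columns would cancel completely and would not register as harmful. I will also note that fault sets are sets of distinct inputs, so $|F|=2$ always involves two different columns. I do not expect a genuine obstacle beyond this bookkeeping.
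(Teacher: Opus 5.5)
Your proof is correct and follows the same route as the paper. Nonzero columns make every accepted singleton harmful, so singletons are excluded exactly when all syndromes are nonzero. Reducedness forbids $P_m=P_{m'}$ whenever $S_m=S_{m'}$, so every accepted pair is harmful and pairs are excluded exactly when the syndromes are pairwise distinct. The paper gives a terser version of this same two-case argument.
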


\begin{proof}
Nonzero syndromes exclude single faults.  A pair $i,j$ is accepted iff
$S_i=S_j$, and it is then harmless only if $P_i=P_j$, which would make the
two columns identical, contrary to reducedness.
\end{proof}

Distance three is therefore a condition on the syndromes alone, and a
distance-$\ge3$ circuit is specified by
\begin{equation}
 \underbrace{\cS\subseteq\F_2^r\setminus\{0\}}_{\text{which syndromes occur}}
 \quad\text{and}\quad
 \underbrace{\lambda:\cS\to\F_2^k}_{\text{what each one outputs}} ,
 \label{eq:split}
\end{equation}
where the \emph{parent} $\cS$, the set of
syndromes that occur, fixes what is detected, and the \emph{labeling} $\lambda$,
with $\lambda(S_m)=P_m$, fixes what is delivered. Since $n=|\cS|$, a
factory with $r$ checks has at most $2^r-1$ inputs.  Higher distances are conditions of the same kind.  For instance, $d\ge4$ iff
$\lambda$ is additive on every triple of syndromes summing to zero.  Writing a
factory this way enforces distance three without enumerating fault sets, which
is what makes our searches affordable.

\subsection{Reading it as a code}\label{sec:matrix-code}

Split the rows of $G_y$ into the output block $P$, whose columns are the
labels $P_m$, and the check block $S$, whose columns are the syndromes
$S_m$ and whose column set is the parent $\cS$.  Then $G_y=\binom PS$ is
the generator data of a CSS code.  Its $X$ stabilizers are the row space of
$S$ and its $Z$ stabilizers are $G_y^{\perp}$, the vectors orthogonal to every
row of $G_y$ under the binary inner product $\langle u,v\rangle=\bigoplus_iu_iv_i$.
The two commute because $\Span S\subseteq\Span G_y$.  If $\rank S=r$ and
$\rank G_y=r+k$, the code encodes $k$ logical qubits.  Row by row, $R_Ny=D$
prescribes the moments of Eq.~\eqref{eq:em-overlaps}.  For the target
$\Tgate^{\otimes k}$ this is triorthogonality~\cite{bravyi2012magic}, and for other diagonal and
entangled outputs its
generalization~\cite{haah2018codes,campbell2017unifieda,%
campbell2017unifying}.

The distance of Eq.~\eqref{eq:distance} is the $Z$-distance of this code:
a $Z$ operator with support $e\in\F_2^n$ commutes with every $X$ check when
$Se=0$ and is a stabilizer when also $Pe=0$, so
$d_Z=\min\{\wt(e):Se=0,\ Pe\ne0\}$, which is Eq.~\eqref{eq:distance} with
$e$ for $F$.  Only the $Z$-distance matters because input faults are
$Z$ faults and the inner code handles the rest.  This reading connects
factories to weight enumerators, puncturing, concatenation and the
classification of small triorthogonal codes~\cite{nezami2022classification}.

\subsection{Reading it as a circuit}\label{sec:matrix-identity}\label{sec:matrix-search}

Section~\ref{sec:matrix-problem} already builds $G$ from a circuit, and
Fig.~\ref{fig:factory} works through the $\factory{8}{3}{2}$ example.  Each
row is a wire, each column is an input and the fault it can leave. A
selection $y$ implements the target when $R_Ny=D$.  The circuit reading
therefore adds no new objects.  It offers two perspectives on the same circuit,
which lead to two ways of searching for a solution of
Eq.~\eqref{eq:em-problem}.  A \emph{direct circuit search} looks for the
selection $y$ itself.  A \emph{borrowed-identity search} starts from a circuit
that acts as the identity and deletes columns from it.

The first perspective treats Eq.~\eqref{eq:em-problem} as a
\emph{satisfiability} (SAT) problem, a set of Boolean variables together with
constraints they must satisfy jointly, as in the circuit search of Jacinto et
al.~\cite{jacinto2026exploring} and the solver searches for triorthogonal
codes~\cite{baldelli2026constructing}.  The variables are the bits of the
selection $y$, one per candidate column.  The target enters as linear
equations in $y$, and the distance as clauses that forbid each low-weight
fault set that is accepted and harmful.  The cost lies in the distance
clauses, which range over fault subsets, and
Theorem~\ref{thm:syndrome-criterion} removes that cost up to $d=3$.

The second perspective starts from circuits whose overall action is a
diagonal Clifford.  These are the selections $h$ with $R_Nh=0$, which form the kernel
$\cK_N=\ker R_N$.  A selection $h\in\cK_N$ is an identity up to a diagonal
Clifford, and \emph{borrowed identities}~\cite{singh2026borrowed} lie in this
kernel.  Any two solutions $y,y'$ of $R_Ny=D$ differ by an element of
$\cK_N$, so the solution set is one \emph{coset} of the kernel, a copy of
$\cK_N$ shifted by a single solution $y_0$,
\begin{equation}
 \{y:R_Ny=D\}=y_0+\cK_N .
 \label{eq:em-coset}
\end{equation}
We call $h=y+y_0$ the \emph{displacement} of a solution $y$ from $y_0$.  The
minimum input count for a target is the weight of a \emph{coset leader},
the lightest element of that coset.  Finding one is the decoding problem that
Amy and Mosca identified with $\Tgate$-count
minimization~\cite{amy2019tcount}.

An identity becomes a factory by deletion.  Pick a bare circuit $b_D$ that
implements the target on the $k$ output wires alone, so that every column of
$b_D$ is \emph{output-only}, with no support on a check wire.  For any $y_0$
with $R_Ny_0=D$, the sum $h=y_0+b_D$ lies in $\cK_N$.  Conversely, toggling
the output-only columns $b_D$ out of an identity $h$ leaves
\begin{equation}
 y_0=h+b_D ,
 \label{eq:em-toggle}
\end{equation}
an encoded factory for $D$.  This is the deletion step of
Ref.~\cite{singh2026borrowed}.  The surviving columns all touch a check, so
the distance is at least two by construction.  The output is only as rich as
the output-only columns the identity had to delete.  A strict borrowed
identity in the sense of Ref.~\cite{singh2026borrowed} is a global phase on
every basis state.  In the column counts $c_Q$ of Eq.~\eqref{eq:em-gate},
this requires
\begin{align}
 c_a &\equiv 0 \pmod 8 \quad\text{for every wire } a,\label{eq:em-literal}\\
 c_{ab} &\equiv 0 \pmod 4 \quad\text{for every pair } a<b,\\
 c_{abc} &\equiv 0 \pmod 2 \quad\text{for every triple } a<b<c .
\end{align}
Equation~\eqref{eq:em-toggle} needs only $R_Nh=0$, which permits a
leftover diagonal Clifford and is therefore strictly more general.

The code reading and both perspectives of the circuit reading describe one
object, and the following theorem says so.  Item (i) is the circuit reading
with the selection $y$ as coordinates, item (ii) is the code reading, and
item (iii) is the circuit reading with the displacement $h$ as coordinates.

\begin{theorem}[one coset, two readings]\label{thm:equivalence}
For selections $y,y'$ and any attainable level-three diagonal target $D$:
\begin{enumerate}[label=(\roman*),leftmargin=2.2em,itemsep=1pt]
\item $y$ and $y'$ implement the same gate up to a diagonal Clifford and global
phase if and only if $R_Ny=R_Ny'$.
\item For every $Q\subseteq\{1,\dots,N\}$ with $1\le|Q|\le3$,
 $(R_Ny)_Q=\wt(\bigodot_{a\in Q}g_a)\bmod2$. Hence $G_y$ is the generalized
 triorthogonal matrix whose prescribed moments are $D$.
\item For any $y_0$ with $R_Ny_0=D$, the map $h\mapsto y_0+h$ is a bijection
 from $\cK_N$ onto $\{y:R_Ny=D\}$.
\end{enumerate}
\end{theorem}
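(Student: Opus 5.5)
I would prove the three items in the order (ii), (i), (iii), because (ii) is a direct unwinding of definitions and (i) relies on it.

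\emph{Item (ii).} Fix $Q$ with $1\le|Q|\le3$. By definition $(R_Ny)_Q=\sum_\alpha y_\alpha\prod_{a\in Q}\alpha_a \bmod 2$, the parity of the number of selected columns $\alpha$ that contain $Q$. In $G_y$ the columns are exactly the selected $\alpha$, and row $g_a$ has a one in column $\alpha$ iff $\alpha_a=1$. The coordinatewise product $\bigodot_{a\in Q}g_a$ therefore has a one precisely at the selected columns containing $Q$. Its weight is $c_Q$, which gives the identity. Reading the rows of $R_Ny=D$ as prescribed moments is then the definition of the generalized triorthogonal matrix with moments $D$, as in Sec.~\ref{sec:matrix-code}.

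\emph{Item (i).} The ``if'' direction is Eq.~\eqref{eq:em-gate}. The selection $y$ implements $\prod_a\Tgate_a^{c_a}\prod\CS_{ab}^{c_{ab}}\prod\CCZ_{abc}^{c_{abc}}$ up to a diagonal Clifford. Since $\Tgate^2,\CS^2,\CCZ^2$ are Clifford, the non-Clifford part depends only on the parities $c_Q\bmod2$, which by (ii) are the entries of $R_Ny$. Equal $R_Ny$ therefore gives the same gate modulo diagonal Cliffords.

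For ``only if'', I would show that the parity vector is an invariant of the gate modulo diagonal Cliffords and global phase. Write the diagonal phase function as $f(z)=\sum_a c_az_a-2\sum c_{ab}z_az_b+4\sum c_{abc}z_az_bz_c \pmod 8$, in units of $\pi/4$. The diagonal Cliffords, up to global phase, are exactly the phase functions of the form $2\ell(z)+4q(z)$ with $\ell$ linear and $q$ quadratic, that is, even linear coefficients, pair coefficients divisible by $4$, and no cubic terms. Every function $\F_2^N\to\mathbb Z_8$ has a unique expansion in the monomials $z_Q$ with $|Q|\le3$, where the coefficient of degree $j$ is determined modulo $8/2^{j-1}$. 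This follows by Möbius inversion, because $8$ kills higher-order terms of the expansion of parities. Consequently the residues $c_a\bmod2$, $c_{ab}\bmod2$ and $c_{abc}\bmod2$ are exactly the coordinates of the quotient by the diagonal Clifford subgroup. Two gates that agree modulo that subgroup thus have equal $R_Ny$. The restriction to an ``attainable'' $D$ only ensures that the coset in (iii) is nonempty.

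The main obstacle is this uniqueness and normal-form step for $\mathbb Z_8$-valued phase polynomials. To be careful, I would prove it by induction on degree. Evaluating at $z=e_a$ isolates $c_a \bmod 8$. Evaluating at $e_a+e_b$ then determines $2c_{ab}\bmod 8$, hence $c_{ab}\bmod4$. Evaluating at $e_a+e_b+e_c$ determines $4c_{abc}\bmod8$, hence $c_{abc}\bmod2$. Quotienting by the diagonal Clifford subgroup then reduces each of these coefficients to its parity, which is the invariant claimed.

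\emph{Item (iii).} This is linear algebra over $\F_2$. If $h\in\cK_N$, then $R_N(y_0+h)=D+0=D$, so the map lands in the solution set. It is injective because addition is cancellative. It is surjective because any solution $y$ gives $h=y+y_0$ with $R_Nh=D+D=0$, and then $y=y_0+h$. This is exactly Eq.~\eqref{eq:em-coset}.
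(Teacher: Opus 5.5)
Your proof is correct and follows the paper's route: (ii) and (iii) are the same direct unwinding of definitions. For (i), the paper simply reads both directions off Eq.~\eqref{eq:em-gate}, whereas you also justify the ``only if'' direction by recovering $c_a\bmod 8$, $c_{ab}\bmod 4$ and $c_{abc}\bmod 2$ from the phases at $e_a$, $e_a+e_b$ and $e_a+e_b+e_c$, and by observing that diagonal Cliffords, normalized to phase one at $z=0$, are exactly the residues with all three even. One wording fix: not every function $\F_2^N\to\mathbb Z_8$ expands in monomials of degree at most three, since the M\"obius expansion generally needs every $z_Q$ (e.g.\ $z_1z_2z_3z_4$). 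Your evaluation argument, however, only needs uniqueness of the reduced coefficients for phase functions already of the form $\sum_a c_az_a-2\sum_{a<b}c_{ab}z_az_b+4\sum_{a<b<c}c_{abc}z_az_bz_c$, and it uses exactly that correctly.
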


\begin{proof}
By Eq.~\eqref{eq:em-gate}, two selections differ by a diagonal Clifford
exactly when their column-count parities, the entries of $R_Ny$ and
$R_Ny'$, agree, proving (i).  For (ii), $\bigodot_{a\in Q}g_a$ is one in column $m$ exactly when
$\alpha^m_a=1$ for all $a\in Q$, and its parity is the $Q$th entry of $R_Ny$.
For (iii), $R_N(y_0+h)=D$ iff $R_Nh=0$, which is the definition of $\cK_N$.
\end{proof}

\begin{table}[!htb]
\centering\footnotesize
\renewcommand{\arraystretch}{1.2}
\caption{The binary matrix formulation underlies both readings of a
magic-state factory.  Read as a code, the matrix is a triorthogonal code.  Read
as a circuit, it is searched through a borrowed identity or directly.  
}
\label{tab:readings}
\setlength{\tabcolsep}{7pt}\begin{tabular}{@{}lll@{}}
\toprule
\rr reading & \rr perspective & \rr search object \\
\midrule
\rr code & \rr triorthogonal code & \rr rows of $G_y$ \\
\midrule
\rr circuit & \rr borrowed identity & \rr displacement $h\in\cK_N$ \\
\rr & \rr direct circuit search & \rr selection $y$ \\

\bottomrule
\end{tabular}
\end{table}

A factory found by deleting from an identity and the ``same'' factory
found by a triorthogonal solver are therefore the same point of the coset
$y_0+\cK_N$ of Eq.~\eqref{eq:em-coset}.  The two readings, and the two
perspectives within the circuit reading, use different coordinates on that
coset (Table~\ref{tab:readings}).  Imposing distance at least three makes
available an intermediate between the two readings, the syndrome-first
formulation of Eq.~\eqref{eq:split}, which we use for the searches below.

\section{Parent evaluation}\label{sec:parent}

For a fixed parent, the remaining variables are the
output rows.  The conditions for compatibility with the checks are linear
or bilinear over
$\F_2$.  An output row must not leak a non-Clifford phase onto a check, two
output rows may coexist only if their coordinatewise product is orthogonal
to every check, and two rows that differ by a check row are the same
output.
The first and third conditions make the legal output directions the
quotient
\begin{equation}
 V(S)=\bigl(S^{\langle2\rangle}\bigr)^{\!\perp}\big/\,S ,
 \label{eq:menu}
\end{equation}
and the second is the \emph{compatibility condition}
\begin{equation}
 \langle a\od b,\,s\rangle=0\qquad\text{for every }s\in S
 \label{eq:compat}
\end{equation}
on any two output representatives $a,b$. Here, $S^{\langle2\rangle}=\Span\{s\od s':s,s'\in S\}$ is the
Schur square of $S$, the span of all coordinatewise products of
check rows. Since $s\od s=s$ over $\F_2$ it contains $S$.  The target
condition of Sec.~\ref{sec:matrix-problem} makes every overlap that
involves a check row even, so $S\subseteq(S^{\langle2\rangle})^{\perp}$. The dimension of $V(S)$ is the parent's
\emph{capacity} $\kappa$, given by
\begin{equation}
 \kappa=\dim V(S)\ \ge\ n-2r-\binom r2.
 \label{eq:capacity}
\end{equation}
Since $S^{\langle2\rangle}$ is spanned by the $r$ check rows and their
$\binom r2$ pairwise products, its orthogonal complement has dimension
at least $n-r-\binom r2$. The quotient by $S$ removes at most $r$ more.
Within the space $V(S)$, one chooses a valid frame, that is, $k$ elements $a_1,\dots,a_k$ of $V(S)$ that
are independent and satisfy Eq.~\eqref{eq:compat} in pairs. The output gate is then read off from the parities of the chosen frame via
Eq.~\eqref{eq:em-overlaps}.

For a fixed parent, we compute $V(S)$ and enumerate compatible independent
output frames. Each frame determines an output gate through its overlap
parities. The dimension $\kappa$ controls the size of the frame space.  A completed
enumeration gives all compatible gates for that parent, including product,
controlled-phase and entangled outputs, without specifying a target gate in
advance.  For example, an incomplete enumeration of the parent that consists of all
nonzero columns of length $6$ already found $418,281$ distance-three factories at
$n=63$, inequivalent up to a change of the output basis, the group
$\mathrm{GL}(k,2)$. When full enumeration is infeasible, we can instead search for a
specified target.  Given a partial frame, the overlap conditions on the next
output row form an affine system.  Such a solve tests extensions of that
partial frame.  Ruling out the target requires exhausting the relevant partial
frames.



\section{Symmetry slots}\label{sec:slots}

The parent evaluation of Sec.~\ref{sec:parent} searches for compatible
outputs once the syndromes are fixed.  The remaining challenge is
to find promising parents: with $r$ checks, there are $2^{2^r-1}$ subsets of
nonzero syndromes to consider, already $2^{127}$ at $r=7$.  We use group
symmetry to organize this search into families that can be explored
efficiently.

Our \emph{symmetry slot ansatz} groups syndromes into orbits under a chosen
group action.  We require the parent to be a union of these orbits and assign
a common output label to all syndromes in each selected orbit.  The solver
then chooses the orbits and their labels jointly to realize the target gate.
This replaces decisions about individual syndromes with decisions about
entire orbits, reducing the search space while allowing both the parent and
its output labeling to be determined within the chosen symmetry.

Let $\cG\le\mathrm{GL}(r,2)$, the group of invertible
$r\times r$ binary matrices, act on the nonzero syndromes. Let $\Omega_\cG$
be the set of orbits, the equivalence classes of syndromes under $\cG$,
with $q_\cG=|\Omega_\cG|$.  A \emph{slot} is a pair $(P,\omega)$ with
$\omega\in\Omega_\cG$ and $P\in\F_2^k$.  Selecting the slot $(P,\omega)$ places in $G$ one
column $(P\mid s)$ for each $s\in\omega$, so all of them carry the same label
$P$.  Each orbit may be assigned at most one output label.  A circuit matrix $G$ is then specified by a
selection of slots whose columns together realize the target gate, $R_Ny=D$.
This is Eq.~\eqref{eq:split} with orbits in place of syndromes, and the input
count is
\begin{equation}
 n=\sum_{(P,\omega)\ \text{selected}}|\omega| .
 \label{eq:slotcount}
\end{equation}
Figure~\ref{fig:slots} illustrates the ansatz for a representation of
the cyclic group $C_3$.

\begin{figure}[t]
\centering
\begin{tikzpicture}[x=1cm,y=0.9cm,
  wire/.style={circle,draw=FlowSlate,fill=FlowMist,thick,minimum size=0.36cm,
               inner sep=0pt,font=\scriptsize},
  rot/.style={-{Latex[length=1.2mm]},FlowPlum,thick},
  tag/.style={font=\scriptsize\itshape,FlowSlate},
  panel/.style={tag,black},
  orb/.style={draw=FlowSlate!60,fill=FlowMist,rounded corners=3pt,inner sep=3pt,font=\scriptsize},
  sel/.style={draw=FlowOchre,fill=FlowOchre!10,rounded corners=3pt,inner sep=3pt,thick,font=\scriptsize}]
\node[panel,anchor=west] at (0,2.2) {(a) $\cG$ rotates two triples of check wires and fixes the seventh};
\foreach \i in {1,...,7} {\pgfmathsetmacro{\x}{0.45+(\i-1)*0.62+(\i>3)*0.3+(\i>6)*0.3}
  \node[wire] (c\i) at (\x,1.55) {$\i$};}
\draw[rot] (c1) to[bend left=50] (c2); \draw[rot] (c2) to[bend left=50] (c3);
\draw[rot] (c3) to[bend left=45] (c1);
\draw[rot] (c4) to[bend left=50] (c5); \draw[rot] (c5) to[bend left=50] (c6);
\draw[rot] (c6) to[bend left=45] (c4);
\node[tag,anchor=west] at (5.1,1.55) {fixed};
\node[panel,anchor=west] at (0,0.75) {(b) orbits of syndromes under $\cG$};
\node[orb,anchor=west] (o1) at (0.2,0.2) {\texttt{0000001}};
\node[tag,anchor=west] at (1.55,0.2) {fixed point, size $1$ ($7$ such)};
\node[sel,anchor=west] (o2) at (0.2,-0.5)
  {\texttt{1001000} $\xrightarrow{\,\cG\,}$ \texttt{0100100} $\xrightarrow{\,\cG\,}$ \texttt{0010010}};
\node[tag,anchor=west] at (5.3,-0.5) {$\omega$, size $3$ ($40$ such)};
\node[panel,anchor=west] at (0,-1.15) {(c) the slot $(P,\omega)$ with $P=0110$: three columns of $G$};
\foreach \r/\a/\b/\c in {1/0/0/0,2/1/1/1,3/1/1/1,4/0/0/0,5/1/0/0,6/0/1/0,7/0/0/1,8/1/0/0,9/0/1/0,10/0/0/1,11/0/0/0} {
  \pgfmathsetmacro{\y}{-1.7-(\r-1)*0.29-(\r>4)*0.08}
  \node[font=\scriptsize] at (0.65,\y) {$\a$}; \node[font=\scriptsize] at (0.95,\y) {$\b$}; \node[font=\scriptsize] at (1.25,\y) {$\c$};}
\draw[FlowSlate!70] (0.5,-2.755)--(1.4,-2.755);
\draw[FlowOchre,thick,rounded corners=3pt] (0.45,-1.5) rectangle (1.45,-4.85);
\node[tag,anchor=west] at (1.75,-2.15) {$k=4$ output rows with label $P$ in every column};
\node[tag,anchor=west,align=left] at (1.75,-3.4)
  {$r=7$ check rows: the syndromes of $\omega$};
\node[tag,anchor=west] at (1.75,-4.4) {$|\omega|=3$ inputs with at most one label per orbit};
\end{tikzpicture}
\caption{The slot ansatz, drawn for one example group.  (a) $\cG$ is the
order-three permutation of the $r=7$ check wires that rotates
$\{1,2,3\}$ and $\{4,5,6\}$ and fixes wire $7$.  (b) Syndromes fall into orbits under $\cG$.  The seven
syndromes constant on each triple are fixed points, and the other $120$
form $40$ orbits of size three, $47$ orbits in all.  (c) A slot pairs an
orbit $\omega$ with an output label $P\in\F_2^k$ and stands for the
columns $(P\mid s)$ of $G$, one for each $s\in\omega$. 
}
\label{fig:slots}
\end{figure}

\begin{figure*}[t]
\centering
\begin{tikzpicture}[
  node distance=0.2cm and 0.9cm,
  bx/.style={draw=FlowSlate!70,rounded corners=3pt,fill=FlowMist,
             align=center,font=\footnotesize,inner sep=4pt,
             text width=4.1cm,minimum height=0.6cm},
  model/.style={bx,draw=FlowPlum,fill=FlowPlum!9},
  exact/.style={bx,draw=FlowSage,fill=FlowSage!9},
  verify/.style={bx,draw=FlowOchre,fill=FlowOchre!10},
  door/.style={bx,draw=FlowSlate,fill=white,very thick,text width=2.2cm},
  ar/.style={-{Latex[length=1.8mm]},semithick,FlowSlate},
  back/.style={-{Latex[length=1.8mm]},semithick,FlowSlate!70,densely dashed},
  tag/.style={font=\scriptsize\itshape,color=FlowSlate!90,inner sep=1.5pt,
              fill=white,align=center}]
\node[model] (prop) {\textbf{proposers}\\group and its action,\\syndrome or puncture set,\\ column list,\\plus a one-line rationale};
\node[tag,font=\small\itshape,text=black,above=0.55cm of prop] (ptag) {\textbf{proposal side: LLM agents}};
\node[exact,right=4.6cm of prop] (ing) {\textbf{ingest}\\parse and validate a proposal,\\drop the ones already run};
\node[exact,below=of ing] (eng) {\textbf{engine}\\slot SAT, parent evaluation,\\ puncture search};
\node[verify,below=of eng] (v1) {\textbf{verifier}\\gate from column parities,\\width, explicit distance witness};
\node[door,below=of v1] (fd) {accepted circuits};
\node[exact,below=of fd] (mg) {\textbf{merge}\\{one row per $(n,k,d,\text{gate})$ class},\\keep the circuit of fewest wires};
\node[verify,below=of mg] (v3) {\textbf{re-verification}\\every entry of the merged catalog reverified};
\node[tag,font=\small\itshape,text=black,anchor=center] at (ptag -| ing) {\textbf{solver side: exact engine}};
\node[model] (dir) at (prop |- v1) {\textbf{reviewer}\\review earlier rounds, choose next steps, prune candidates};
\node[align=right,font=\footnotesize,color=FlowSlate,
      inner sep=2pt,left=0.5cm of dir] (seed)
  {\textbf{opening prompt}\\aim of the campaign};
\draw[ar] (ing)--(eng); \draw[ar] (eng)--(v1); \draw[ar] (v1)--(fd);
\draw[ar] (fd)--(mg); \draw[ar] (mg)--(v3);
\draw[ar] (seed) -- (dir);
\draw[ar] (dir) -- (prop) node[midway,tag,right=1pt] {direction};
\coordinate (top) at ($(prop.north east)!0.64!(ing.north west)+(0,0.85)$);
\coordinate (bot) at ($(top |- v3.south)+(0,-0.35)$);
\coordinate (lab) at ($(top)!0.80!(bot)$);
\draw[line width=2.2pt,FlowSlate] (top) -- ($(lab)+(0,1.95)$);
\draw[line width=2.2pt,FlowSlate] ($(lab)+(0,-1.95)$) -- (bot);
\node[font=\scriptsize\itshape,color=FlowSlate!90,inner sep=0pt,rotate=90,anchor=center] at (lab)
  {\textbf{models never cross this line}};
\draw[ar] (prop.east) -- (ing.west);
\node[tag,anchor=south east,align=center] at ($(top |- prop.east)+(-0.2,0.06)$)
  {specs,\\ column lists};
\draw[back] (v1.west) -- (dir.east);
\node[tag,anchor=north east,align=center] at ($(top |- v1.west)+(-0.2,-0.07)$)
  {verdicts, logs,\\rejection reasons};
\begin{scope}[on background layer]
\node[fit=(dir)(prop),rounded corners=6pt,draw=FlowSlate!45,densely dotted,
      inner sep=7pt] {};
\end{scope}
\end{tikzpicture}
\caption{The agent loop.  An opening prompt, written by the user, sets the
aim of the campaign.  On the proposal side the reviewer reads the results
of earlier rounds and sets a direction, which the proposers use to 
generate proposals.  A proposal names a symmetry group and its
action, a syndrome or puncture set, or an explicit column list, and
carries a one-line rationale.  The heavy line separates the two sides,
and only files cross it.  Proposals go right as specification files and
column lists, and verdicts, logs and rejection reasons come back left.
No model runs on the solver side.  Ingest parses each proposal, rejects
one it cannot read and drops one already run.  The engine runs the tool
the proposal names.  Many proposals return no circuit, and those negatives
are passed back to the proposal side along with the successes.  A circuit that does come back goes to the verifier,
which re-derives its gate, width and distance from the matrix alone, with no knowledge of the agents' proposal.
Merge keeps one row per class, a class being $(n,k,d,\textrm{output gate})$
with the gate taken up to $\mathrm{GL}(k,2)$ and diagonal Clifford corrections, as in Sec.~\ref{sec:ai-landscape}.  Where
several circuits reach the same class, the one on fewest wires is kept.
The merged catalog is reverified once more at the end of the round.  One
round is one pass around the loop, and a campaign is a sequence of
rounds.}
\label{fig:loop}
\end{figure*}

 Orbits are disjoint and each carries
at most one label, so the selected columns have distinct nonzero syndromes
and thus, by Theorem~\ref{thm:syndrome-criterion}, distance $3$ is built-in.  The ansatz is therefore a
SAT problem.  One Boolean variable per candidate slot records whether that
slot is selected.  There are at most $2^kq_\cG$ variables instead of
$2^k(2^r-1)$, one at-most-one constraint per orbit, and parity constraints on
orbit sums for each index $Q$ of the target $D$.
We pass the instance to solvers such as CP-SAT or CryptoMiniSat.  A solver
may find a solution, establish infeasibility, or stop without resolving the
instance.  For trivial $\cG$ every orbit is a singleton and the instance is
the unrestricted one over all reduced distance-$\ge3$ factories with $r$
checks.  A nontrivial $\cG$ compresses the search but narrows its scope. 

The group of Fig.~\ref{fig:slots}, an order-three permutation that rotates
two triples of check wires and fixes the rest, shows the size of the
compression.  At $r=7$ it has $q_\cG=47$ orbits, seven fixed points and forty
of size three, so at $k=4$ the SAT instance carries $2^4\times47=752$ slot
variables where the unrestricted instance carries $2^4\times127=2032$.  The
parent count shrinks as well.  A parent is now a choice of orbits rather than
of syndromes, giving $2^{47}$ candidates instead of $2^{127}$, a reduction by
a factor of $2^{80}$.  On six check wires the same group leaves
$23$ orbits, $368$ slot variables against $1008$, and $2^{23}$ parents
against $2^{63}$. Among the factories of Sec.~\ref{sec:ai-landscape}, the slot search with this
symmetry found $\factory{28}{2}{3}$, $\factory{44}{4}{3}$, $\factory{51}{5}{3}$
and $\factory{85}{1}{5}$ for pure $\Tgate^{\otimes k}$, $\factory{47}{4}{3}$, and
the $\factory{63}{4}{3}$ $\CS^{\otimes2}$ factory.



\section{Agent-driven search}\label{sec:ai-methods}

Choosing groups by hand gives factories beyond the classified range,
but little guidance on which groups to try next.  We therefore use LLM agents
to select groups, parents and puncture sets based on earlier search outcomes.
Each search is organized as a \emph{campaign}, a sequence of \emph{rounds}.
In each round, \emph{proposer} agents write proposals from a fixed
\emph{brief}, a deterministic \emph{engine} runs the solver each proposal
names, an independent \emph{verifier} checks every returned circuit, and a
\emph{reviewer} agent reads the outcomes and writes the \emph{direction},
a short instruction for the next round.

A proposal names one of three kinds of object.  The first is a \emph{group}
with its action on the check wires and a column count, which the slot
solver of Sec.~\ref{sec:slots} converts into an orbit search.  The second is
a parent, a syndrome set whose compatible outputs are enumerated by the
parent evaluation of Sec.~\ref{sec:parent}.  The third is a \emph{puncture
set}.  Puncturing starts from one fixed code on many coordinates and deletes
some of them.  A deleted coordinate stops being an input and becomes an
output, so deleting $j$ independent coordinates from a code of length $\ell$
leaves a factory with $\ell-j$ inputs and $j$ additional
outputs~\cite{haah2018codes}.  This is how the largest factories in the
catalog were found, where our searches cannot enumerate parents and
labelings.  It is also where the lowest overhead exponents of
Table~\ref{tab:rates} come from, obtained by instantiating the frameworks of
Refs.~\cite{wills2026improved,san-jose2026asymptotically}.
Appendix~\ref{app:ai-puncture} describes these searches in detail, and
Appendix~\ref{app:proofs} gives the pruning rule used in the puncture search
and two extension moves that build new factories from verified ones.  A
proposal may also be an explicit circuit, which bypasses the engine and goes
straight to the verifier.

Each campaign consists of successive rounds of proposals and solver runs
(Fig.~\ref{fig:loop}).  The proposers work from a fixed brief describing the
factory representation, accepted proposal formats and search objective.  Each
round adds a direction written by the reviewer from the previous round's
results.  Both proposers and reviewers contribute to the proposal batch.  An
\emph{ingest} step removes malformed and duplicate proposals, and the engine
processes the remaining candidates.  An independent
verifier checks each returned circuit before it enters the catalog.  The
catalog is then reverified as a whole.  The reviewer reads the solver outcomes
and verification logs to set the next direction.  A campaign ends after its
assigned number of rounds or when stopped by the authors.
Appendix~\ref{app:round} follows one logged round from its direction to the
next.

The verifier re-derives the target gate and check parities from the returned
columns.  It proves the claimed distance bound and checks any witness used to
claim an exact distance.  It also checks that the output rows are independent
modulo the check span and that, in the reported output basis, every declared
output occurs in the recovered non-Clifford gate.  These checks exclude
dependent logical labels and explicit spectator outputs.

Unsuccessful searches also guide the reviewer's next steps.  A solver result
of \texttt{UNSAT} rules out a solution within the specified constraints,
whereas a timeout leaves the instance unresolved and may prompt a change in
strategy or computational budget.  In the puncture campaign, for example,
repeated randomized restarts timed out without a conclusive result.  The
reviewer responded by focusing the search on a smaller neighborhood, where
the solver completed the search and found an improved factory.



Across campaigns, we varied the number of reviewers and proposers, the models
used, and the number of rounds.  The full catalog of results is available in the
public database~\cite{jain2026shubhamj810}.

\section{Outlook}\label{sec:outlook}

This work brings together a unified binary-matrix formulation, symmetry-based
searches and LLM agents to discover magic-state factories beyond the reach of
exhaustive classification.  That classification now reaches $n\le54$ at
distance three and above, at a cost of order $10^5$ core-hours, and the same
method is estimated to need $10^8$ core-hours for the next input
count~\cite{wills2026classification}.  We extend the search beyond this limit by
combining symmetry reductions with LLM agents that use previous search
outcomes to propose promising groups, parents and puncture sets.  These
targeted searches yield both pure-$\Tgate$
factories and entangled outputs built from combinations of $\Tgate$, $\CS$
and $\CCZ$ gates.  The resulting catalog is a starting point.  Our
methods can readily be used to discover more factories by varying the search
goals and constraints.

An immediate next step is to translate the improvements in input count and
overhead exponent into savings in a fault-tolerant architecture.  This
requires accounting for acceptance probabilities, leading error coefficients
and the Clifford circuits that consume the outputs.  For entangled outputs,
the advantage of distilling a desired gate directly also depends on the gate
mix of the compiled algorithm.  The explicit circuits in our catalog provide
a basis for these comparisons and for searches guided by the cost of the
complete computation.

The framework also offers several directions for further exploration.  Within
the present class of postselected diagonal protocols, searches at distance
four and above can incorporate additional constraints on the output labels
(Sec.~\ref{sec:matrix-problem}).  Extending the output gates to level four,
including $\CCCZ$, requires $\pi/16$ rotations and phase polynomials of
degree four~\cite{cui2017diagonala}.  Broadening the search to catalytic and
Hadamard-assisted protocols could yield further
improvements~\cite{singh2026borrowed,gidney2019efficient}, with verifiers
adapted to check these circuits directly.  Across these extensions, the same
workflow can be retained: agents guide the search, deterministic solvers
construct candidates, and independent verifiers establish their properties.

To support continued discovery, we release the catalog, explicit circuits,
search engine and verifiers together with a \emph{context directory} for the
quantum community~\cite{jain2026shubhamj810}.  The directory brings together the
mathematical framework, search techniques and lessons from previous campaigns,
including established results and open directions.  These resources enable others to brief their own agents,
build on our searches and adapt them to their gate mixes and resource
requirements.  We invite the community to extend these methods and contribute
verified circuits, building an active, open-source repository that grows with
each new effort.

\section*{Use of AI}

The framework, the search designs, the solver, the verifiers and the
acceptance criteria in this work are the authors' own. Within that framework, large language models were the
proposing half of the loop of Sec.~\ref{sec:ai-methods}: they proposed parents,
symmetry groups and search directions and, in the later runs, pruned
candidates as reviewers. They also assisted with the scripts around the loop, the catalog build
and the figure tooling. Every circuit here was re-derived
from its stored columns by the verifiers of Sec.~\ref{sec:ai-methods} before
entering the catalog.  The authors are responsible for the results and claims in this paper.

The agents were Anthropic Claude models throughout.  Reviewers were
typically the latest available Fable or Opus versions, and proposers were
Opus or Sonnet models.  Some campaigns were run with
\textsc{humanize2}~\cite{zhang2026humanize}, an open-source orchestrator for
workflows of coding agents, in a loop of the same shape as the one described
in Sec.~\ref{sec:ai-methods}.

\section*{Data availability}
The catalog, the engine, the verifiers, the context directory of briefs and
campaign notes, and every circuit in this paper are available in the public
database of Ref.~\cite{jain2026shubhamj810}.

\begin{acknowledgments}
We thank Anqi Gong, Andrew Cross, Juan-Cruz Benito, David Kremer and Swanand
Ravindra Kadhe for helpful discussions. A.W. acknowledges support from the MIT-IBM Watson AI Lab. This preprint is assigned number MIT-CTP/6112. The theoretical framework, the symmetric search design and the acceptance criterias are the authors' own. We acknowledge the use of Anthropic AI models for help in building the code and generating proposals in our agent loop. Full details of the AI usage can be found in our section titled "Use of AI".
\end{acknowledgments}

\bibliography{refs}

\appendix
\onecolumngrid
\raggedbottom
\section{Pruning and extension rules used in the search}\label{app:proofs}

We prove one pruning rule and two extension moves.  Downward closure gives the
pruning rule used in the $\RM(3,10)$ search of
Appendix~\ref{app:ai-puncture}.  Free-row appending was used after puncturing
the seven-check simplex parent, which contains all $127$ nonzero columns in
$\F_2^7$.  One resulting factory is $\factory{112}{16}{3}$.  Parity lift gives
another certified extension and maps $\factory{99}{12}{3}$ to
$\factory{100}{12}{4}$.

For the puncture search, write $g_x$ for the column of degree-$\le3$ monomial
evaluations at coordinate $x$, $g_X=\bigoplus_{x\in X}g_x$ for its sum over a
set $X$, and $U_J=\Span\{g_j:j\in J\}$ for the span over the deleted set $J$.
We assume that the columns indexed by $J$ are independent, as they are in the
puncture searches below, so each deleted coordinate contributes one output.
A retained fault support $X$ is accepted and harmful
(Sec.~\ref{sec:matrix-problem}) exactly when $0\ne g_X\in U_J$.  For a given support $X$, this requires a test of nonzero membership in $U_J$.

\begin{theorem}[downward closure]\label{thm:downward}
If a puncture set $J$ is enlarged to $J\cup L$ with the columns
$\{g_l\}_{l\in L}$ independent modulo $U_J=\Span\{g_j:j\in J\}$, the span
of the deleted columns, so that each new coordinate is a new output, then $d(J\cup L)\le d(J)$, where $d(J)$ is the distance of
the factory punctured at $J$.  Hence distance-preserving
punctures are downward closed.
\end{theorem}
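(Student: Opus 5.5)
The plan is to argue directly from the characterization just stated: a retained fault support $X$ is accepted and harmful exactly when $0\ne g_X\in U_J$. Let $d(J)$ be attained by a support $X\subseteq[\ell]\setminus J$ with $|X|=d(J)$ and $0\ne g_X\in U_J$. We want a fault support for the enlarged puncture $J\cup L$ whose size is at most $|X|$ and which is still accepted and harmful.

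The first step is the monotonicity of the spans. Since $J\subseteq J\cup L$, we have $U_J\subseteq U_{J\cup L}$. Hence any retained support $X'$ with $0\ne g_{X'}\in U_J$ also satisfies $0\ne g_{X'}\in U_{J\cup L}$. The independence hypothesis on $\{g_l\}$ modulo $U_J$ is used only so that the characterization applies to $J\cup L$. Each deleted coordinate is then a genuine new output, the deleted columns stay independent, and the output/check split is as the characterization assumes.

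The only real obstacle is that $X$ may meet $L$, so $X$ need not be a retained support after the enlarged puncture. To handle this, I would write $X=X_1\sqcup X_2$ with $X_1=X\setminus L$ and $X_2=X\cap L$. Then
\[
g_{X_1}=g_X\oplus g_{X_2}\in U_J+\Span\{g_l:l\in L\}=U_{J\cup L}.
\]
If $g_{X_1}\ne0$, then $X_1$ is an accepted harmful support for $J\cup L$ with $|X_1|\le|X|$, and we are done.

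If $g_{X_1}=0$, then $g_X=g_{X_2}$ is a nonzero element of $\Span\{g_l\}_{l\in L}$ that also lies in $U_J$. This contradicts independence of $\{g_l\}$ modulo $U_J$. Here $X_2\neq\emptyset$, since otherwise $g_{X_1}=g_X\ne0$.

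So in every case $d(J\cup L)\le |X|=d(J)$. If $J$ admits no accepted harmful support, $d(J)=\infty$ and there is nothing to prove. Downward closure then follows by contraposition. If $J\cup L$ preserves a target distance $\ge d$, then $d\le d(J\cup L)\le d(J)$, so every independent subset $J$ also preserves it. Equivalently, once a puncture set fails the distance test, no independent enlargement can pass, which is the pruning rule.
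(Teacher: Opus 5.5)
Your proof is correct and follows the paper's argument. You split a minimal accepted harmful support $X$ into $X\setminus L$ and $X\cap L$, note that $g_{X\setminus L}=g_X+g_{X\cap L}$ lies in $U_{J\cup L}$, and use independence of the new columns modulo $U_J$ to rule out $g_{X\setminus L}=0$; your split on whether $g_{X\setminus L}$ vanishes just merges the paper's three cases on $|X\cap L|$. One small slip in exposition: you say the independence hypothesis is used only to make the characterization apply to $J\cup L$, but you (correctly) use it again as the key step excluding $g_{X_1}=0$.
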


\begin{proof}
Let $X$ be a minimum accepted harmful support for $J$, so $0\ne g_X\in U_J$,
and put $R=X\cap L$.  If $R=\varnothing$, $X$ remains accepted and harmful.
If $0<|R|<|X|$ the
residual $X\setminus R$ has column sum $g_X+g_R$ in the enlarged span, nonzero
because independence makes every nonempty sum of new columns nonzero modulo
$U_J$, so it is accepted, harmful and shorter.  If $R=X$ then $g_X\in U_J$
contradicts
that independence.
\end{proof}

\begin{proposition}[parity lift]\label{prop:parity}
Let $G$ be a reduced, full-rank $\factory nkd$ circuit with check span $S$,
output rows $g_1,\dots,g_k$ and odd distance $d$.  Suppose its target is
\emph{homogeneous}: every degree-one and degree-two output moment
vanishes, so the output is a product of $\CCZ$ gates on triples that need
not be disjoint.  If $n$ is even and
$\one\notin\Span(S,g_1,\dots,g_k)$,
appending the all-ones row $\one$ as a check gives a factory for the same
target with $n$ inputs, one more check wire and distance at least
$d+1$.  If $n$ is odd, first append a column supported only on the new
check wire.  The result has $n+1$ inputs, one more check wire and
distance at least $d+1$.
\end{proposition}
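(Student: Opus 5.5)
The plan is to check directly the three properties the statement claims for the enlarged matrix: it implements the same target, it is still a reduced, full-rank circuit, and its distance is at least $d+1$. Throughout I read the data through Theorem~\ref{thm:equivalence}(ii), so that the gate is fixed by the moments $\wt(\bigodot_{a\in Q}g_a)\bmod 2$ of Eq.~\eqref{eq:em-overlaps}. I treat the even-$n$ case first and then reduce the odd case to it.

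\emph{Even $n$, target.} Adding a row leaves every column's output part unchanged, so all output-only moments are unchanged. I must show that every moment involving the new wire, whose row is $\one$, vanishes. Since $\one\od v=v$, a moment involving $\one$ equals the moment of the same index set with $\one$ removed.
\begin{itemize}[leftmargin=2.2em,itemsep=1pt]
\item $\{\one\}$ gives $\wt(\one)=n$, which is even.
\item $\{\one,g_i\}$ and $\{\one,g_i,g_j\}$ reduce to degree-one and degree-two output moments. These vanish by homogeneity, and this is exactly why that hypothesis is needed.
\item $\{\one,s\}$, $\{\one,s,s'\}$ and $\{\one,s,g_i\}$ reduce to moments of $G$ that touch a check wire. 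These vanish because $G$ is a valid factory.
\end{itemize}
So the new row is a legal check, and by Eq.~\eqref{eq:em-gate} the gate is unchanged up to a diagonal Clifford.

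\emph{Even $n$, structure.} Reducedness is immediate, since columns that were distinct and nonzero stay so after a coordinate is appended. Full rank follows from $\one\notin\Span(S,g_1,\dots,g_k)$.

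\emph{Even $n$, distance.} This is the heart of the statement. Every column now has a $1$ in the new check coordinate, so an accepted fault set $F$ in the new circuit must have $|F|$ even. Its syndromes also cancel on the old checks, and its labels are the old labels. So if $F$ is accepted and harmful in the new circuit, it is accepted and harmful in $G$, giving $|F|\ge d$ by Eq.~\eqref{eq:distance}. Because $d$ is odd and $|F|$ is even, $|F|\ge d+1$.

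\emph{Odd $n$.} First append a column $e$ that is zero on every old wire and has a $1$ on the new wire. The new check row is then all-ones on $n+1$ columns, which is an even number, and the even-case argument for the target carries over. Output rows and old check rows are zero on $e$, so every old moment is unchanged. The only moment $e$ can affect is the degree-one moment of the new wire, which is $n+1\equiv0$.

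The new column is nonzero and distinct from the others, since every old column is nonzero on the old wires. Full rank still holds because the new row is the only row with a $1$ at $e$.

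For the distance, let $F'$ be accepted and harmful. Then $|F'|$ is even. Since $e$ has zero label and zero old syndrome, $F=F'\setminus\{e\}$ is accepted and harmful in $G$, so $|F'|\ge|F|\ge d$. Parity then forces $|F'|\ge d+1$.

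I expect the main obstacle to be the moment bookkeeping for index sets that mix the new wire, old checks and outputs. In particular, I need to make sure the reduction $\one\od v=v$ sends each such set to a moment already known to vanish. This is also where homogeneity and the parity of $n$ enter, and I would check that exhaustively by case on $|Q|$.
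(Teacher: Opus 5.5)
Your proposal is correct and follows the paper's proof essentially step for step. The shared steps are: legality of $\one$ as a check via $\one\od v=v$, which reduces each new moment to homogeneity or to existing check moments; the reducedness and rank checks; and the even-size-of-accepted-sets parity argument for the distance. The only difference is cosmetic, in the odd-$n$ distance step: you remove $e$ uniformly to get $|F'|\ge|F|\ge d$ and then invoke parity, whereas the paper splits into the cases $e\in F$ and $e\notin F$.
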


\begin{proposition}[free-row appending]\label{prop:freerow}
Let $G$ be a reduced, full-rank distance-$\ge3$ factory with check span $S$ and
output rows $g_1,\dots,g_k$, and prescribe the moments a new output is to
have.  These are a degree-one value $t$, degree-two values $v_i$ against each $g_i$
and degree-three values $w_{ij}$ against each pair $g_i,g_j$ with $i<j$, all in
$\F_2$.  The rows $a\in\F_2^n$ with
$\wt(a)\equiv t$, $\wt(a\od g_i)\equiv v_i$,
$\wt(a\od g_i\od g_j)\equiv w_{ij}$ and
$\wt(a\od s)\equiv\wt(a\od s\od s')\equiv\wt(a\od s\od g_i)\equiv0
\pmod2$ for all $s,s'\in S$ form an affine subspace of $\F_2^n$, possibly
empty, and those outside the span of the existing check and output rows
extend $G$ to a reduced, full-rank distance-$\ge3$ circuit with $k+1$ output
rows and the prescribed moments.  For any row $h$, define its moment vector
relative to the old outputs by
\[
 M(h)=\bigl(\wt(h),(\wt(h\od g_i))_i,
 (\wt(h\od g_i\od g_j))_{i<j}\bigr)\pmod 2.
\]
Suppose no nonzero combination of the old output rows has zero moment vector.
If
\[
 (t,(v_i)_i,(w_{ij})_{i<j})
 \ne M\!\left(\sum_{i\in I}g_i\right)
 \quad\text{for every }I\subseteq\{1,\dots,k\},
\]
including $I=\varnothing$, then no nonzero combination of the extended output
rows is idle.  The extended circuit is therefore a
$\factory n{k+1}{\ge3}$ factory with no idle output direction.
\end{proposition}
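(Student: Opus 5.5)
The proof splits into four parts: the affine-subspace claim, the target after appending, the distance, and the absence of idle outputs. We treat them in that order, working throughout with the row picture of Theorem~\ref{thm:equivalence}(ii), in which every moment is the parity of the weight of a coordinatewise product of rows.

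\emph{Affine structure and the target.} Each listed condition has the form $\wt(a\od u)\equiv c$ for a fixed vector $u$ built only from the old rows. Such a condition is the linear equation $\langle a,u\rangle=c$ over $\F_2$, since $\wt(a\od u)\equiv\langle a,u\rangle$. The solution set of a finite system of such equations is an affine subspace, possibly empty.

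For a solution $a$ appended as a new output row, we check every moment of degree at most three that involves $a$.
\begin{itemize}
\item The moments of $a$ alone, against one or two old outputs, and against a check and an old output are prescribed directly.
\item Any moment containing $a$ twice collapses by $a\od a=a$ to a moment already listed. For example, $\wt(a\od a\od s)=\wt(a\od s)$ and $\wt(a\od a\od g_i)=\wt(a\od g_i)$.
\item The check-only conditions $\wt(a\od s)$ and $\wt(a\od s\od s')$ are imposed for every $s,s'$ in the span $S$. This covers every check wire, and every pair of check wires, including the degenerate pairs $s=s'$.
\end{itemize}
All moments not involving $a$ are unchanged, so the gate is the old gate tensored with the prescribed new moments. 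Every check entry of $D$ stays zero.

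\emph{Reduced, full rank, distance.} The new matrix has the same syndromes $S_m$; each column's label only gains a coordinate. The old columns were distinct and nonzero, so the new ones are too, and the circuit is still reduced. Hence Theorem~\ref{thm:syndrome-criterion} gives distance $\ge3$ immediately. Full rank follows because $a$ was chosen outside $\Span(S,g_1,\dots,g_k)$, so $\rank G$ increases by one.

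\emph{No idle output.} This is the step needing care. An output direction $h=\sum_{i\in I}g_i+\epsilon a$ is idle if all its degree-$\le3$ moments against the output rows vanish, modulo checks. We would formalise idleness as the moment vector of $h$ relative to the extended outputs being zero. We split into two cases.
\begin{itemize}
\item \emph{Case $\epsilon=0$.} The moments of $h$ against the old rows form exactly $M(h)$. By hypothesis $M(h)\neq0$ for nonzero $h$.
\item \emph{Case $\epsilon=1$.} The key observation is that $M$ is additive on the degree-one coordinates; the degree-two and three coordinates are linear in $h$ for fixed partners. Hence $M(a+g_I)=M(a)+M(g_I)$, with each coordinate computed as $\wt((a+g_I)\od u)=\wt(a\od u)+\wt(g_I\od u)$ mod 2 for the fixed vectors $u\in\{\one,g_i,g_i\od g_j\}$. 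Since $M(a)=(t,v,w)\neq M(g_I)$ by hypothesis, $M(h)\ne0$.
\end{itemize}
In both cases some moment of $h$ against the old rows is odd, so $h$ is not idle.

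The main obstacle is pinning down that this moment vector is the right notion of idleness, that is, that its vanishing is exactly what an idle direction means. We expect to justify this by Theorem~\ref{thm:equivalence}(i): a direction whose moments with all outputs vanish contributes only a diagonal Clifford under a change of output basis.
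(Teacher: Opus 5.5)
Your proposal is correct and follows the paper's proof essentially step for step: the affine solution set, the unchanged syndromes with Theorem~\ref{thm:syndrome-criterion} giving distance at least three, the rank increase from $a\notin\Span(S,g_1,\dots,g_k)$, and the linearity of $M$, which gives $M(a+\sum_{i\in I}g_i)=(t,(v_i)_i,(w_{ij})_{i<j})+M(\sum_{i\in I}g_i)\ne0$. The only difference is that the paper also uses $h\od h=h$ to show that the moments of $h$ against the new row $a$ are combinations of entries of $M(h)$, so that $M$ is a complete idleness test in the extended space and in any output basis; you only need the trivial direction, since a nonzero $M(h)$ already exhibits an odd moment of $h$, and the multilinearity of the moment parities settles the basis-independence you flag as your remaining worry.
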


\begin{proof}[Proof of Proposition~\ref{prop:parity}]
A row $c$ may be appended as a check exactly when $\wt(c)$, $\wt(c\od g)$
and $\wt(c\od g\od g')$ are even for all existing rows $g,g'$, since a
check carries no non-Clifford phase.  For $c=\one$ these are its length,
$\wt(g)$ and $\wt(g\od g')$, which for output rows are the degree-one and
degree-two moments and vanish by homogeneity, and for checks are even already, since no monomial
touching a check wire carries a non-Clifford phase.  Thus $\one$ is a legal
check when its length is even.  In the even-$n$ case the added row is
independent of all existing rows by hypothesis, while appending the same bit
to every column preserves distinctness and nonzeroness.

In the odd-$n$ case, extend every old row by a zero at the appended column and
take $c=\one$ of length $n+1$.  The new column is distinct because the old
circuit has no zero column, and $c$ is independent of all old rows because it
is the only row nonzero at the new coordinate.  The new column enters no
overlap of the old rows and has zero output label, so the target is unchanged.

In either case every column has a one on the new check wire, and hence every
accepted fault set has even size.  Such a set is accepted by the old checks as
well.  In the even-$n$ case it is therefore an accepted harmful set of $G$,
so it cannot have size below $d$ or size $d$, since $d$ is odd.  In the odd-$n$ case, let
$e$ denote the appended column.  If an accepted harmful set $F$ excludes $e$,
then it is an accepted harmful set of $G$ of even size and therefore has
$|F|\ge d+1$.  If $e\in F$, then $F\setminus\{e\}$ is an accepted harmful set
of $G$, so $|F|-1\ge d$.  Thus the new distance is at least $d+1$ in both
cases.
\end{proof}

When $d(G)=d$ exactly, $\one$ is automatically independent of the check rows,
since otherwise every accepted set would already have even size.  Only its
independence from the output rows needs to be checked.

\begin{proof}[Proof of Proposition~\ref{prop:freerow}]
Each condition is the parity of the coordinatewise product of $a$ with a
fixed vector, hence an affine-linear equation over $\F_2$, and a finite
system of such equations has an affine solution set.  Appending an output
row changes no syndrome, so Theorem~\ref{thm:syndrome-criterion} keeps the
distance at least three, and the columns stay distinct and nonzero.  A row
outside the span of the existing rows increases the row rank by one and hence
adds one logical qubit.  A row inside that span does not.
The moment map $M$ is linear over $\F_2$.  Every output direction involving
the appended row has the form $a+\sum_{i\in I}g_i$, whose moment vector is
\[
 (t,(v_i)_i,(w_{ij})_{i<j})
 +M\!\left(\sum_{i\in I}g_i\right).
\]
The stated condition makes each such vector nonzero.  Output directions not
involving $a$ are nonidle by hypothesis.  To see that $M$ also detects
idleness in the full extended space, write a direction involving the new row
as $h=a+s$, with $s$ in the old output span.  Since $h\od h=h$, for every old
output-span vector $u$ we have
\[
 \wt(h\od a)\equiv\wt(h)+\wt(h\od s),\qquad
 \wt(h\od a\od u)\equiv
 \wt(h\od u)+\wt(h\od s\od u)\pmod2.
\]
The terms on the right are linear combinations of components of $M(h)$.
Multilinearity then covers every extended output-span vector.  Hence no
nonzero direction is idle in any output basis.
\end{proof}

\section{Puncture searches}\label{app:ai-puncture}

For a parent with hundreds of coordinates, full enumeration of output
labelings is beyond our search capacity.  We instead fix a base code and
search over coordinates to delete, so that each independent puncture turns
one input into one output.  The first search uses $\RM(3,10)$ on $1024$
coordinates, one of the two Reed--Muller parents used in
Ref.~\cite{haah2018codes} and the base of its large-block table.  We keep a
puncture set only if the verifier certifies the required lower bound on the
distance.

Our search differs from Ref.~\cite{haah2018codes} in how the puncture set is
varied.  Haah and Hastings searched random punctures of the smaller
$\RM(2,7)$ parent.  For $\RM(3,10)$, they began with a randomly punctured
distance-three code and successively unpunctured coordinates to raise the
distance to four and five.  Their distance-six and distance-seven codes were
obtained by further unpuncturing their best-rate distance-five code.  We run in
the opposite direction, deleting further from a puncture set already in hand
while holding the distance fixed.  Downward closure
(Theorem~\ref{thm:downward}) prunes these extensions.  A coordinate whose
deletion alone breaks the bound cannot appear in any admissible extension,
and a pair that breaks it cannot appear together, so single and pairwise
tests reduce the search to cliques of a compatibility graph.  Starting from
the Haah--Hastings puncture set $\factory{887}{137}{5}$, the scan identifies
$37$ safe coordinates among the $887$ unpunctured coordinates.  They form a
$37$-vertex, $469$-edge compatibility graph.  An exact audit of all $5{,}314$
cliques of size at least seven finds $25$ admissible sets of size seven and
none larger.  This certifies $k\le144$ within this descendant family, with the
bound attained by $\factory{880}{144}{5}$ at $\gamma=1.125$.  A separate search
found the distance-five entry of Table~\ref{tab:rates},
$\factory{879}{145}{5}$ at $\gamma=1.120$.  It is another puncture set of the
same base code, but not a descendant of the $\factory{887}{137}{5}$ seed.
Puncturing fewer coordinates can raise the distance instead.  With $122$
punctures, the base code yields $\factory{902}{122}{6}$.  Relative to
$\factory{880}{144}{5}$, it uses $22$ more inputs and produces $22$ fewer
outputs, with $\gamma=1.117$.  A
\emph{ruin-and-recreate move} deletes part of a puncture set and rebuilds
it.  Restoring four of the $122$ coordinates and deleting five others gives
$\factory{901}{123}{6}$ with $\gamma=1.111$.  Its leading harmful-fault
count is larger, with $2781$ accepted harmful fault sets of weight six against
$2475$.

Some of the lowest exponents in the catalog come from puncturing other
parents.  These include the weighted Reed--Muller codes of
Ref.~\cite{wills2026improved} and the derivative four-block Schur-orthogonal extensions.  Lemma~2.3 of
Ref.~\cite{san-jose2026asymptotically} gives an independent derivation of the
required Clifford phase corrections.  The downset construction of Proposition~3.7 in
Ref.~\cite{wills2026improved} restricts the puncture set itself.  If it contains
a Boolean vector, it also contains every coordinatewise smaller vector, which
yields a closed-form distance formula.  Our graph search allows arbitrary
independent puncture sets, so we certify each distance directly from the
resulting columns.  Two further moves are used.  A
\emph{logical-pair check} deletes two output rows and appends their sum as a
check.  A \emph{paired-column contraction} restricts the row space so that two
chosen columns agree, then deletes those columns and a check row that separates
them.  Successful contractions save two inputs while preserving the verified
output count and distance.  The distance-six entry of Table~\ref{tab:rates}
uses both, and the largest factory of this kind in our catalog,
$\factory{1715}{287}{6}$, reaches $\gamma=0.998$, the smallest explicit
pure-$\Tgate$ factory with an exponent below one that we know of.

\section{A worked round}\label{app:round}

Figure~\ref{fig:round} follows the second round of one slot-search campaign that
improved the best known input counts in its search cells.  The top box is the brief of
Sec.~\ref{sec:ai-methods}, fixed for the whole campaign, and everything below
it is that round's state.  Two words in the figure are the campaign's own.  A
\emph{geometry} is a slot specification, naming an output count $k$, an
ordered list of check blocks and a permutation group acting on each block.  A
\emph{cell} is one combination of output count, output gate and distance, for
which the campaign keeps a best known input count.

Eighty-four geometries ran in this round.  The summary line counts geometries and the
direction below it counts cells, which is why the two read differently.  Six
geometries beat the standing record, in two cells, four of them in the $\CCZ$
cell, and four more opened one cell that had been empty.  Three of the classes
this paper reports come from the round.  $\factory{64}{3}{4}$ for $\CCZ$
improves on the catalog's earlier $\factory{66}{3}{4}$, and the round also returned
$\factory{60}{2}{4}$ for $\Tgate^{\otimes2}$ and $\factory{48}{2}{4}$ for
$\CS$.  All three were found under a cap rather than proved minimal, and the
review recorded that at the time.

The rest of the round consisted of unsuccessful solver outcomes.  Forty-five
geometries returned \texttt{UNSAT} in seconds, ruling out solutions within
the specified constraints (Sec.~\ref{sec:ai-methods}).  The review compared
this distance-four sweep with the previous round's hundred-plus distance-three
runs, all of which had returned \texttt{UNKNOWN} at the time limit.  On this evidence, the reviewer restricted the next round
to distance four and above.

\begin{figure}[h]
\centering
\begin{tikzpicture}[
  font=\footnotesize,
  bx/.style={draw=FlowSlate!70,rounded corners=3pt,fill=FlowMist,align=left,
             inner sep=5pt,text width=0.80\linewidth},
  fixed/.style={bx,draw=FlowSlate,fill=white},
  model/.style={bx,draw=FlowPlum,fill=FlowPlum!9},
  exact/.style={bx,draw=FlowSage,fill=FlowSage!9},
  verify/.style={bx,draw=FlowOchre,fill=FlowOchre!12},
  ar/.style={-{Latex[length=2mm]},semithick,FlowSlate},
  back/.style={-{Latex[length=2mm]},semithick,FlowSlate!70,densely dashed},
  lab/.style={font=\scriptsize\itshape,color=FlowSlate,inner sep=2pt}]

\node[fixed] (brief)
  {\textbf{brief}\quad\emph{fixed for the whole campaign}\\[2pt]
   The proposing agent emits a list of geometry specifications and nothing
   else; a fixed harness runs each through the slot solver, expands a
   solution into explicit columns and audits it with an independent
   verifier.  A specification names $k$, an ordered list of check blocks,
   and a permutation group acting on each block.};

\node[model,below=3mm of brief] (dir)
  {\textbf{direction, round 2}\quad\emph{set by the review of round 1}\\[2pt]
   ``Decide cells, not sample them.''  A cell counts as decided when every
   admissible input count below the record has returned \texttt{SAT} or
   \texttt{UNSAT}.  Five priority cells, an output pattern set per target, a
   list of groups to vary at $r\in\{6,7,9\}$.};

\node[model,below=3mm of dir] (prop)
  {\textbf{one proposal}\quad\emph{of $84$ that round}\\[2pt]
   \texttt{k: 3}\quad \texttt{blocks: [S 3, S 3, S 3]}\quad
   \texttt{target: CCZ}\quad \texttt{d\_target: 4}\quad \texttt{cap: 65}\\[2pt]
   \emph{rationale, quoted:} ``Record is $66$; cap set to $65$ so a solution
   is a new record and an \texttt{UNSAT} is a no-go inside this ansatz\ldots\
   $d\ge4$ slot models are compact and CP-SAT optimises them well, which is
   the opposite of the $d=3$ multi-output models where only exact-count
   decisions terminate.''};

\node[exact,below=3mm of prop] (eng)
  {\textbf{engine}\quad\emph{}\\[2pt]
   build the slot model, hand it to CP-SAT, expand any solution into explicit
   columns, then verify those columns independently};

\node[verify,below=3mm of eng] (ver)
  {\textbf{verdict}\\[2pt]
   \texttt{FEASIBLE\ \ n=64\ \ d=4\ \ ok=True\ \ ->\ NEW-RECORD\ \ (123s)}\\[2pt]
   \texttt{ok=True} is the verifier re-deriving the gate, the width and the
   distance from the returned columns.};

\node[verify,below=3mm of ver] (sum)
  {\textbf{round summary}\\[2pt]
   \texttt{ran: 84}\quad \texttt{new\_records: 6}\quad \texttt{new\_cells: 4}\quad
   \texttt{unsat: 45}\quad \texttt{undecided: 23}\quad \texttt{wall\_s: 2596}};

\node[model,below=3mm of sum] (nxt)
  {\textbf{direction, round 3}\quad\emph{what the review set next}\\[2pt]
   ``$d\ge4$ is where this ansatz is productive.  The $84$-run distance-four
   sweep gave $2$ records, a new cell and $45$ \texttt{UNSAT} certificates.
   Over $100$ distance-three runs gave nothing but \texttt{UNKNOWN}.  Do not
   propose $d=3$ this iteration.''  Then: capped decision runs one below
   the standing record, and chase the relatives of the one geometry that won
   every cell it entered.};

\draw[ar] (brief)--(dir);
\draw[ar] (dir)--(prop);
\draw[ar] (prop)--(eng);
\draw[ar] (eng)--(ver);
\draw[ar] (ver)--(sum);
\draw[ar] (sum)--(nxt);
\draw[back] (nxt.east) -- ++(0.6,0) |- (prop.east)
  node[pos=0.25,lab,anchor=west,xshift=2mm] {next round};
\end{tikzpicture}
\caption{One logged round of the agent loop, quoted from the campaign's
brief, its direction files, its proposal set and its run log.  The brief is
fixed for the whole campaign.  Everything below it belongs to this round.  The
proposal shown is one of $84$, and it returned a factory at $n=64$ where the
catalog held $66$.  Most of the round's runs failed cheaply,
forty-five \texttt{UNSAT} in seconds.  That is what let the review close the
next round to distance four and above.  The same round produced the
$\factory{60}{2}{4}$ and $\factory{48}{2}{4}$ records.}
\label{fig:round}
\end{figure}

\section{Full catalog}\label{app:catalogue}

Tables~\ref{tab:catalogue} and~\ref{tab:catalogue-mixed} print the
level-three, distance-$\ge3$ factories our searches produced,
ordered by input count.  There are $699$ classes, a class being defined as
in Sec.~\ref{sec:ai-landscape}.  A class
reached by several searches appears once, represented by the circuit of fewest
wires.

Of the $699$ classes, $414$ have pure-$\Tgate^{\otimes k}$ outputs.
Table~\ref{tab:catalogue} lists $220$ representatives, giving their parameters
$\factory nkd$ and width $N$.  Their phase polynomial
is $x_0+\dots+x_{k-1}$ and their $\Tgate$-count is $k$.  Only the highest
$k$ at each $n$ and $d$ is printed, since discarding outputs trivially gives smaller
$k$. Table~\ref{tab:catalogue-mixed} lists
the other $285$ classes,
including their output phase polynomials except where summarized below.

The \emph{output phase polynomial} specifies the delivered gate up to
Clifford corrections.  It contains the same data as the target vector $D$ of
Sec.~\ref{sec:matrix}.  Its degree-one, degree-two and degree-three terms are
respectively the $\Tgate$, $\CS$ and $\CCZ$ content of the output gate.  We write it in one notation throughout.  The wires of a factor are separated
by spaces, and \texttt{+} joins the factors.  The number of wires in a factor
gives its type, so one wire is a $\Tgate$, two a $\CS$ and three a $\CCZ$.  Thus \texttt{0+1+0 1} is
$\Tgate_0\Tgate_1\CS_{01}$, \texttt{0 1 2} is $\CCZ_{012}$, and
\texttt{0 64 83} is $\CCZ_{0,64,83}$.  A polynomial containing only one-wire factors is a product of $\Tgate$ gates.
Two- and three-wire factors specify $\CS$ and $\CCZ$ gates.  The wires are written out rather than run together because
$k$ reaches $84$ in Table~\ref{tab:catalogue-mixed}, where \texttt{10} would be unreadable as either wire
ten or the pair $\{1,0\}$.

The remaining columns are the width $N=k+r$ and the
$\Tgate$-count of the output gate~\cite{amy2019tcount}.  A dash
marks an output wider than $k=6$, beyond the range of our minimization.  Nine output polynomials have dozens to thousands of factors,
the largest $3587$. Those rows record the number of
factors instead, and the gate itself is in the database.  A distance
written $\ge d$ is a proved lower bound rather than a measured value. The tables include rediscoveries alongside new results.  The novelty
criterion and class counts are given in Sec.~\ref{sec:ai-landscape}.

\par\medskip
\begingroup\centering\footnotesize
\setlength{\tabcolsep}{6pt}
\renewcommand{\arraystretch}{1.0}
\refstepcounter{table}\label{tab:catalogue}%
\begin{tabular}{@{} r l r @{\quad}|@{\quad} r l r @{\quad}|@{\quad} r l r@{}}
\multicolumn{9}{c}{TABLE~\thetable. Factories with a pure $\Tgate^{\otimes k}$ output: $k$ independent}\\
\multicolumn{9}{c}{magic states.  Only the highest $k$ at each $n$ and $d$ is listed.}\\[3pt]
\toprule
\# & factory & $N$ & \# & factory & $N$ & \# & factory & $N$\\
\midrule
1 & $\factory{15}{1}{3}$ & 5 & 47 & $\factory{66}{8}{3}$ & 16 & 93 & $\factory{127}{1}{7}$ & 22\\
2 & $\factory{23}{1}{3}$ & 7 & 48 & $\factory{67}{9}{3}$ & 16 & 94 & $\factory{127}{5}{5}$ & 19\\
3 & $\factory{27}{1}{3}$ & 7 & 49 & $\factory{68}{10}{3}$ & 17 & 95 & $\factory{127}{23}{3}$ & 30\\
4 & $\factory{28}{2}{3}$ & 9 & 50 & $\factory{69}{9}{3}$ & 17 & 96 & $\factory{135}{17}{3}$ & 25\\
5 & $\factory{29}{1}{3}$ & 9 & 51 & $\factory{70}{10}{3}$ & 18 & 97 & $\factory{141}{2}{4}$ & 12\\
\bottomrule
\end{tabular}
\par\medskip
\begin{tabular}{@{} r l r @{\quad}|@{\quad} r l r @{\quad}|@{\quad} r l r@{}}
\multicolumn{9}{c}{TABLE~\thetable\ (continued)}\\[3pt]
\toprule
\# & factory & $N$ & \# & factory & $N$ & \# & factory & $N$\\
\midrule
139 & $\factory{386}{94}{3}$ & 105 & 167 & $\factory{655}{177}{3}$ & 189 & 195 & $\factory{887}{137}{5}$ & 176\\
140 & $\factory{399}{77}{3}$ & 98 & 168 & $\factory{679}{185}{3}$ & 197 & 196 & $\factory{896}{128}{6}$ & 189\\
141 & $\factory{404}{100}{3}$ & 111 & 169 & $\factory{684}{132}{3}$ & 168 & 197 & $\factory{901}{123}{6}$ & 176\\
142 & $\factory{405}{101}{3}$ & 112 & 170 & $\factory{691}{189}{3}$ & 201 & 198 & $\factory{902}{122}{6}$ & 176\\
143 & $\factory{406}{102}{3}$ & 113 & 171 & $\factory{702}{194}{3}$ & 205 & 199 & $\factory{904}{120}{\geq 6}$ & 176\\
144 & $\factory{407}{73}{4}$ & 92 & 172 & $\factory{715}{197}{3}$ & 209 & 200 & $\factory{905}{119}{\geq 6}$ & 176\\
\bottomrule
\end{tabular}
\par\medskip
\refstepcounter{table}\label{tab:catalogue-mixed}%
\begin{tabular}{@{}r l r p{0.50\textwidth} r@{}}
\multicolumn{5}{c}{TABLE~\thetable. Factories with other outputs, deduplicated up to CNOT$+S$ output equivalence.}\\[3pt]
\toprule
\# & factory & $N$ & output phase polynomial & $\Tgate$-count\\
\midrule
1 & $\factory{31}{2}{3}$ & 7 & \texttt{0+\allowbreak{}1+\allowbreak{}0~1} & 1\\
2 & $\factory{31}{3}{3}$ & 8 & \texttt{0+\allowbreak{}1+\allowbreak{}2+\allowbreak{}0~1+\allowbreak{}0~2+\allowbreak{}1~2+\allowbreak{}0~1~2} & 1\\
3 & $\factory{31}{4}{3}$ & 9 & \texttt{0+\allowbreak{}1+\allowbreak{}2+\allowbreak{}3+\allowbreak{}0~1+\allowbreak{}0~2+\allowbreak{}0~3+\allowbreak{}1~2+\allowbreak{}1~3+\allowbreak{}2~3+\allowbreak{}0~1~2+\allowbreak{}0~1~3+\allowbreak{}0~2~3+\allowbreak{}1~2~3} & 1\\
4 & $\factory{31}{5}{3}$ & 10 & \texttt{0+\allowbreak{}1+\allowbreak{}2+\allowbreak{}3+\allowbreak{}4+\allowbreak{}0~1+\allowbreak{}0~2+\allowbreak{}0~3+\allowbreak{}0~4+\allowbreak{}1~2+\allowbreak{}1~3+\allowbreak{}1~4+\allowbreak{}2~3+\allowbreak{}2~4+\allowbreak{}3~4+\allowbreak{}0~1~2+\allowbreak{}0~1~3+\allowbreak{}0~1~4+\allowbreak{}0~2~3+\allowbreak{}0~2~4+\allowbreak{}0~3~4+\allowbreak{}1~2~3+\allowbreak{}1~2~4+\allowbreak{}1~3~4+\allowbreak{}2~3~4} & 1\\
5 & $\factory{35}{2}{3}$ & 8 & \texttt{0+\allowbreak{}1+\allowbreak{}0~1} & 1\\
6 & $\factory{35}{2}{3}$ & 8 & \texttt{0~1} & 3\\
\bottomrule
\end{tabular}
\par\medskip
\begin{tabular}{@{}r l r p{0.50\textwidth} r@{}}
\multicolumn{5}{c}{TABLE~\thetable\ (continued)}\\[3pt]
\toprule
\# & factory & $N$ & output phase polynomial & $\Tgate$-count\\
\midrule
43 & $\factory{55}{4}{3}$ & 11 & \texttt{0+\allowbreak{}1~2~3} & 7\\
44 & $\factory{55}{6}{3}$ & 13 & \texttt{0+\allowbreak{}1+\allowbreak{}2+\allowbreak{}3+\allowbreak{}4+\allowbreak{}0~1+\allowbreak{}0~2+\allowbreak{}1~2+\allowbreak{}1~4+\allowbreak{}2~3+\allowbreak{}3~4+\allowbreak{}0~1~2+\allowbreak{}0~1~3+\allowbreak{}0~2~4+\allowbreak{}0~3~5+\allowbreak{}0~4~5+\allowbreak{}1~3~5+\allowbreak{}2~4~5+\allowbreak{}3~4~5} & 9\\
45 & $\factory{55}{6}{3}$ & 13 & \texttt{0+\allowbreak{}1+\allowbreak{}2+\allowbreak{}3+\allowbreak{}4+\allowbreak{}5+\allowbreak{}0~1+\allowbreak{}0~2+\allowbreak{}0~3+\allowbreak{}0~4+\allowbreak{}0~5+\allowbreak{}1~2+\allowbreak{}1~3+\allowbreak{}1~4+\allowbreak{}1~5+\allowbreak{}2~3+\allowbreak{}2~4+\allowbreak{}2~5+\allowbreak{}0~1~2+\allowbreak{}3~4~5} & 9\\
46 & $\factory{55}{6}{3}$ & 13 & \texttt{0+\allowbreak{}1+\allowbreak{}2+\allowbreak{}3+\allowbreak{}4+\allowbreak{}5+\allowbreak{}0~1+\allowbreak{}0~2+\allowbreak{}0~3+\allowbreak{}0~4+\allowbreak{}0~5+\allowbreak{}1~2+\allowbreak{}1~3+\allowbreak{}1~4+\allowbreak{}1~5+\allowbreak{}2~3+\allowbreak{}2~4+\allowbreak{}2~5+\allowbreak{}3~4+\allowbreak{}3~5+\allowbreak{}0~1~2+\allowbreak{}0~1~3+\allowbreak{}0~1~4+\allowbreak{}0~1~5+\allowbreak{}0~2~3+\allowbreak{}0~2~4+\allowbreak{}0~2~5+\allowbreak{}0~3~4+\allowbreak{}0~3~5+\allowbreak{}1~2~3+\allowbreak{}1~2~4+\allowbreak{}1~3~5+\allowbreak{}2~4~5+\allowbreak{}3~4~5} & 11\\
\bottomrule
\end{tabular}
\par\medskip
\begin{tabular}{@{}r l r p{0.50\textwidth} r@{}}
\multicolumn{5}{c}{TABLE~\thetable\ (continued)}\\[3pt]
\toprule
\# & factory & $N$ & output phase polynomial & $\Tgate$-count\\
\midrule
85 & $\factory{60}{5}{3}$ & 12 & \texttt{0+\allowbreak{}1~2+\allowbreak{}3~4+\allowbreak{}0~1~2+\allowbreak{}0~1~3+\allowbreak{}0~2~4} & 10\\
86 & $\factory{60}{5}{3}$ & 12 & \texttt{0+\allowbreak{}1~2+\allowbreak{}3~4+\allowbreak{}0~1~2+\allowbreak{}0~1~3+\allowbreak{}0~2~4+\allowbreak{}0~3~4} & 9\\
87 & $\factory{60}{5}{3}$ & 12 & \texttt{0+\allowbreak{}1~3+\allowbreak{}0~1~2+\allowbreak{}0~3~4+\allowbreak{}1~2~3} & 10\\
88 & $\factory{60}{5}{3}$ & 12 & \texttt{0+\allowbreak{}1~3+\allowbreak{}0~1~2+\allowbreak{}2~3~4} & 10\\
89 & $\factory{60}{5}{3}$ & 12 & \texttt{0+\allowbreak{}1~3+\allowbreak{}2~4+\allowbreak{}0~1~2+\allowbreak{}0~3~4} & 10\\
90 & $\factory{60}{5}{3}$ & 12 & \texttt{0+\allowbreak{}1~3+\allowbreak{}2~4+\allowbreak{}0~1~2+\allowbreak{}1~2~3} & 9\\
\bottomrule
\end{tabular}
\par\medskip
\begin{tabular}{@{}r l r p{0.50\textwidth} r@{}}
\multicolumn{5}{c}{TABLE~\thetable\ (continued)}\\[3pt]
\toprule
\# & factory & $N$ & output phase polynomial & $\Tgate$-count\\
\midrule
127 & $\factory{63}{6}{3}$ & 16 & \texttt{0+\allowbreak{}1+\allowbreak{}2+\allowbreak{}3+\allowbreak{}4+\allowbreak{}5+\allowbreak{}0~1+\allowbreak{}0~2+\allowbreak{}0~4+\allowbreak{}1~2+\allowbreak{}1~3+\allowbreak{}1~4+\allowbreak{}1~5+\allowbreak{}2~3+\allowbreak{}2~4+\allowbreak{}2~5+\allowbreak{}3~4+\allowbreak{}4~5+\allowbreak{}0~1~2+\allowbreak{}0~1~3+\allowbreak{}0~2~3+\allowbreak{}0~3~4+\allowbreak{}0~3~5+\allowbreak{}1~2~4+\allowbreak{}1~3~4+\allowbreak{}2~4~5} & 12\\
128 & $\factory{63}{6}{3}$ & 16 & \texttt{0+\allowbreak{}1+\allowbreak{}2+\allowbreak{}3+\allowbreak{}5+\allowbreak{}0~1+\allowbreak{}0~2+\allowbreak{}2~5+\allowbreak{}0~1~2+\allowbreak{}0~1~3+\allowbreak{}0~1~4+\allowbreak{}0~2~3+\allowbreak{}0~2~4+\allowbreak{}0~3~4+\allowbreak{}0~3~5+\allowbreak{}1~2~3+\allowbreak{}1~2~4+\allowbreak{}1~2~5+\allowbreak{}1~3~4+\allowbreak{}1~4~5+\allowbreak{}2~4~5+\allowbreak{}3~4~5} & 11\\
129 & $\factory{63}{6}{3}$ & 16 & \texttt{0+\allowbreak{}2+\allowbreak{}3+\allowbreak{}4+\allowbreak{}0~1+\allowbreak{}1~4+\allowbreak{}2~3+\allowbreak{}2~4+\allowbreak{}0~1~2+\allowbreak{}0~1~3+\allowbreak{}0~4~5+\allowbreak{}1~4~5+\allowbreak{}2~4~5+\allowbreak{}3~4~5} & 10\\
130 & $\factory{63}{6}{3}$ & 12 & \texttt{0~1+\allowbreak{}2~3+\allowbreak{}4~5} & 9\\
131 & $\factory{63}{6}{3}$ & 12 & \texttt{0~1~2+\allowbreak{}3~4~5} & 13\\
132 & $\factory{63}{7}{3}$ & 17 & \texttt{0+\allowbreak{}1~2+\allowbreak{}1~3+\allowbreak{}1~4+\allowbreak{}1~5+\allowbreak{}1~6+\allowbreak{}1~2~3+\allowbreak{}1~2~4+\allowbreak{}1~2~5+\allowbreak{}1~3~6+\allowbreak{}1~4~6+\allowbreak{}2~3~4+\allowbreak{}3~4~5+\allowbreak{}3~4~6} & --\\
\bottomrule
\end{tabular}
\par\medskip
\begin{tabular}{@{}r l r p{0.50\textwidth} r@{}}
\multicolumn{5}{c}{TABLE~\thetable\ (continued)}\\[3pt]
\toprule
\# & factory & $N$ & output phase polynomial & $\Tgate$-count\\
\midrule
169 & $\factory{76}{2}{4}$ & 11 & \texttt{0+\allowbreak{}0~1} & 2\\
170 & $\factory{76}{4}{4}$ & 13 & \texttt{0+\allowbreak{}1+\allowbreak{}2+\allowbreak{}3+\allowbreak{}0~1+\allowbreak{}2~3} & 2\\
171 & $\factory{76}{4}{4}$ & 13 & \texttt{0~1~2+\allowbreak{}0~1~3} & 7\\
172 & $\factory{76}{5}{3}$ & 12 & \texttt{0+\allowbreak{}1+\allowbreak{}2+\allowbreak{}3~4} & 6\\
173 & $\factory{76}{5}{3}$ & 12 & \texttt{0+\allowbreak{}1+\allowbreak{}2~3~4} & 8\\
174 & $\factory{76}{5}{3}$ & 12 & \texttt{0+\allowbreak{}1~2+\allowbreak{}3~4} & 7\\
\bottomrule
\end{tabular}
\par\medskip
\begin{tabular}{@{}r l r p{0.50\textwidth} r@{}}
\multicolumn{5}{c}{TABLE~\thetable\ (continued)}\\[3pt]
\toprule
\# & factory & $N$ & output phase polynomial & $\Tgate$-count\\
\midrule
211 & $\factory{90}{11}{3}$ & 19 & \texttt{0+\allowbreak{}1+\allowbreak{}2+\allowbreak{}3+\allowbreak{}4+\allowbreak{}5+\allowbreak{}6+\allowbreak{}7+\allowbreak{}8~9~10} & --\\
212 & $\factory{91}{4}{3}$ & 11 & \texttt{0+\allowbreak{}1+\allowbreak{}2~3} & 5\\
213 & $\factory{91}{6}{3}$ & 13 & \texttt{0+\allowbreak{}1+\allowbreak{}2+\allowbreak{}3~4~5} & 9\\
214 & $\factory{91}{6}{3}$ & 13 & \texttt{0+\allowbreak{}1~2+\allowbreak{}3~4~5} & 10\\
215 & $\factory{91}{7}{3}$ & 14 & \texttt{0+\allowbreak{}1~2+\allowbreak{}3~4+\allowbreak{}5~6} & --\\
216 & $\factory{91}{7}{3}$ & 14 & \texttt{0+\allowbreak{}1~2~3+\allowbreak{}4~5~6} & --\\
\bottomrule
\end{tabular}
\par\medskip
\begin{tabular}{@{}r l r p{0.50\textwidth} r@{}}
\multicolumn{5}{c}{TABLE~\thetable\ (continued)}\\[3pt]
\toprule
\# & factory & $N$ & output phase polynomial & $\Tgate$-count\\
\midrule
253 & $\factory{127}{12}{3}$ & 19 & \texttt{0~1~2+\allowbreak{}3~4~5+\allowbreak{}6~7~8+\allowbreak{}9~10~11} & --\\
254 & $\factory{127}{16}{3}$ & 23 & \texttt{0+\allowbreak{}1~2~3+\allowbreak{}4~5~6+\allowbreak{}7~8~9+\allowbreak{}10~11~12+\allowbreak{}13~14~15} & --\\
255 & $\factory{127}{17}{3}$ & 24 & \texttt{0+\allowbreak{}1+\allowbreak{}2+\allowbreak{}3+\allowbreak{}4+\allowbreak{}5~6~7+\allowbreak{}8~9~10+\allowbreak{}11~12~13+\allowbreak{}14~15~16} & --\\
256 & $\factory{127}{26}{3}$ & 33 & \texttt{0+\allowbreak{}1+\allowbreak{}2+\allowbreak{}3+\allowbreak{}4+\allowbreak{}5+\allowbreak{}6+\allowbreak{}7+\allowbreak{}8+\allowbreak{}9+\allowbreak{}10+\allowbreak{}11~12~13+\allowbreak{}11~14~15+\allowbreak{}11~16~17+\allowbreak{}12~18~19+\allowbreak{}12~20~21+\allowbreak{}13~22~23+\allowbreak{}13~24~25+\allowbreak{}14~18~22+\allowbreak{}14~20~24+\allowbreak{}15~19~23+\allowbreak{}15~21~25+\allowbreak{}16~18~25+\allowbreak{}16~20~23+\allowbreak{}17~19~24+\allowbreak{}17~21~22} & --\\
257 & $\factory{127}{37}{3}$ & 44 & \emph{entangled output}, $1868$ factors & --\\
258 & $\factory{128}{3}{6}$ & 18 & \texttt{0~1~2} & 7\\
\bottomrule
\end{tabular}
\par\medskip
\endgroup\medskip

\end{document}